\newcommand{\rmv}[1]{}
\newcommand{\RMP}[1]{}
\newcommand{\XMP}[1]{}
\newcommand{\ars}{{registers}}
\newcommand{\W}[1]{\textsc{write}(#1)}
\newcounter{NewCounter}
\newtheorem{theorem}[NewCounter]{Theorem}
\newtheorem{claim}{Claim}[NewCounter]
\newtheorem{corollary}[NewCounter]{Corollary}
\newtheorem{lemma}[NewCounter]{Lemma}
\newtheorem{observation}[NewCounter]{Observation}
\newtheorem{definition}[NewCounter]{Definition}
\newtheorem{notation}[NewCounter]{Notation}
\title{Randomized Consensus with Regular Registers}
\author[1]{Vassos Hadzilacos}
\author[1]{Xing Hu}
\author[1]{Sam Toueg}
\affil[1]{Department of Computer Science, University of Toronto}
\begin{document}
\maketitle

\begin{abstract}
The well-known randomized consensus algorithm by Aspnes and Herlihy~\cite{AH1990consensus}
	for asynchronous shared-memory systems was
	proved to work, even against a strong adversary,
	under
	the assumption that the registers that it uses are \emph{atomic} registers.
With atomic registers every read or write operation is \emph{instantaneous} (and thus indivisible).
As pointed out in~\cite{sl1}, however,
	a randomized algorithm that works with atomic registers
	does not necessarily work
	if we replace the atomic registers that it uses with linearizable implementations of registers.
	
This raises the following question:
	does the randomized consensus algorithm by Aspnes and Herlihy still work against a strong adversary
	if we replace its atomic registers
	with linearizable registers?

We show that the answer is affermative,
	in fact we show that even linearizable registers are not necessary.
More precisely,
	we prove that the algorithm by Aspnes and Herlihy works
	against a strong adversary even if 
	the algorithm uses only \emph{regular} registers.

%
%
%

\end{abstract}

\section{Introduction}

In the \emph{consensus problem}, each process \emph{proposes} some value 
	and must \emph{decide} a value such that the following properties hold:
	
\begin{itemize}
	\item \textbf{Validity}: If a process decides a value $v$ then some process proposes $v$.
	\item \textbf{Agreement}: No two processes decide different values.
	\item \textbf{Termination}: Every non-faulty process eventually decides a value.
\end{itemize}

This problem cannot be solved
	in asynchronous shared-memory systems with process crashes~\cite{AbuAmara},
	but there are \emph{randomized} algorithms that solve
	a weaker version of the consensus problem that
	requires Termination ``only'' with probability 1.
In particular, the well-known randomized consensus algorithm by Aspnes and Herlihy~\cite{AH1990consensus}
	for shared-memory systems was
	proved to work, even against a strong adversary and with any number of process crashes, under
	the assumption that the registers that it uses are \emph{atomic} Single-Writer Multiple-Reader (SWMR) registers.
With atomic registers every read or write operation is \emph{instantaneous} (and thus indivisible).

As pointed out in~\cite{sl1}, however,
	a randomized algorithm that works with atomic registers
	does not necessarily work (i.e., it may lose some of its properties, including termination)\XMP{Really?}
	if we replace the atomic registers that it uses
	with \emph{linearizable (implementations of) $\ars$};\footnote{Roughly speaking,
	an object (implementation) is linearizable~\cite{HerlihyWing1990}
	if, although each object operation spans an interval between its invocation and its response,
	and so operations can be \emph{concurrent},
	operations behave as if they occur in a sequential order (called ``linearization order'')
	that is consistent with the order in which operations actually occur: if an operation $o$ completes
	before another operation $o'$ starts, then $o$ precedes $o'$ in the linearization order.}
	intuitively, this is because a strong adversary can exploit a weakness in the linearizability requirement
	to break some properties of the algorithm.
If, however, we replace the algorithm's atomic registers
	with \emph{strongly} linearizable (implementations of) $\ars$,
	then the algorithm is guaranteed to retain its original properties~\cite{sl1}.
This leads to the following natural question:
	does the randomized consensus algorithm by Aspnes and Herlihy still work against a strong adversary
	if we replace the atomic registers that it uses
	with linearizable registers,
	or does it require strongly linearizable registers?
	
\emph{Prima facie}, it appears that this algorithm requires strongly linearizable registers to work.
This is because, as we explain in the Appendix,
	the proof given in~\cite{AH1990consensus} that the algorithm terminates with \mbox{probability~1}
	relies on the fact that the algorithm's registers are atomic:
	this proof does \emph{not} work with (non-atomic) linearizable registers.

We show here that strong linearizability is not necessary for this algorithm to work.
In fact, we show the perhaps surprising result that even linearizable registers are not necessary.
More precisely,
	we prove that the algorithm works
	against a strong adversary even if 
	the algorithm uses only \emph{regular} SWMR registers.

\section{Regular registers}

Each read or write operation on a non-atomic register (such as a regular register) spans an interval that starts with an \emph{invocation} and terminates with a \emph{response}.

\begin{definition}
Let $o$ and $o'$ be any two register operations.

\begin{itemize}

\item $o$ \emph{precedes} $o'$ if the response of $o$ occurs before the invocation of $o'$.
	
\item $o$ \emph{is concurrent with} $o'$ if neither precedes the other.

\end{itemize}
\end{definition}

\begin{definition}

A register initialized to some value $v_0$ is a \emph{regular} register if and only if it satisfies the following property.
If the response of a read operation $r$ returns the value $v$ then:
\begin{enumerate}

\item there is a write $v$ operation that immediately precedes $r$ or is concurrent~with~$r$, or	
\item no write operation precedes $r$ and $v = v_0$.
\end{enumerate}

\end{definition}

Note that regular registers are \emph{not} linearizable registers because they allow ``new-old inversions'': two consecutive read operations that are concurrent with a write operation can be such that the first read returns the new value of the register, while the second read returns the old value.

\section{Aspnes and Herlihy's algorithm works with regular registers}

In the algorithm by Aspnes and Herlihy (shown in Algorithm~\ref{Algocon}),
	every process $p$ now has a \emph{regular} SWMR
	register $R[p]$ that $p$ can write and every process can read.
The register $R[p]$ contains a pair $(prefer,round)$:
	$R[p].prefer$ is either a value in $\{0,1\}$ that process $p$ currently ``prefers''
	(and will eventually be the value that $p$ decides),
	or the special symbol $\bot$ (which indicates that $p$ is temporarily ``paused'');
	$R[p].round$ is a counter that increases in each iteration
	(except for each iteration where $p$ is paused, i.e, $R[p].prefer$ is $\bot$).

At the beginning of the algorithm, each process $p$ sets $R[p]$ to contain the pair $(v,1)$,
	where $v$ is the value that $p$ proposes. 
The process then enters a loop in which it first reads the registers of all processes,
	and based on what it reads it determines whether to decide and, if not, how to update its register.
Given the values of the registers that a process $p$ reads at the start of each iteration,
	the following terms are used in the description of the algorithm and its proof:


\begin{definition} ~
\begin{itemize}
\item  A process $p$ is a \emph{leader} if $R[p].round \geq R[j].round$ for every process $j$. 

\item A \emph{process $q$ agrees with $p$} if
	$R[p].prefer  =  R[q].prefer\neq \bot$.
	
\item A \emph{process $q$ agrees with $p$ on some value $v$}
	if $v \neq \bot$ and $R[p].prefer  =  R[q].prefer = v$.

\item A \emph{process $q$ trails $p$ by at least 2 rounds} if $R[p].round \ge R[q].round+2$.

\end{itemize}

\end{definition}

\begin{algorithm}[!h]
\caption{ Aspnes and Herlihy's Randomized Consensus Algorithm}\label{Algocon}
~\\
For each process $p \in \Pi$: $R[p]$ is a shared regular SWMR register that $p$ can write\newline
\hspace*{3.8cm}and every process can read;
	 initially $R[p] = (\bot,0)$ 

\vspace{2mm}

\begin{algorithmic}[1]
\Statex
\textsc{Consensus($v$):} \Comment{code executed by process $p$ to propose the value $v$}
\Indent
\State $R[p] \leftarrow ( v, 1)$ \Comment{$\W{v,1}$}
\State \textbf{while true do} 
\Indent
\State read all registers $R[*]$ \label{readall}
\State $(x,r) \leftarrow R[p]$\label{capture} \Comment{$x \in \{0, 1, \bot \}$}
\State \textbf{if} I'm a leader \textbf{and} all who disagree (with me) trail by at least 2 rounds:\label{notdecide}
\Indent 
\State decide $x$\label{decide} and halt\Comment{$x \in \{0, 1 \}$}
\EndIndent
\State \textbf{else if} leaders agree on some value $v^*$:\label{lagree}
\Indent 
\State $R[p] \leftarrow ( v^*,  r+1  )$\label{aleader} \Comment{$\W{v^*,  r+1}$}
\EndIndent
\State \textbf{else if} $x \neq \bot$: \label{checkbot}
\Indent 
\State $R[p] \leftarrow ( \bot, r  )$\label{bot} \Comment{$\W{\bot, r }$}
\EndIndent
\State \textbf{else}:
\Indent 
\State $R[p] \leftarrow ( \text{flip()},  r +1 )$\label{acoin} \Comment{$\W{\text{flip()},  r +1}$}
\EndIndent
\EndIndent
\EndIndent
\end{algorithmic}
\end{algorithm}

Unless we indicate otherwise, 
	henceforth we consider an arbitrary execution of Algorithm~\ref{Algocon}.
In the following we say that a process \emph{crashes} if it does not halt but takes finitely many steps, and 
	a process is \emph{correct} if it does not crash.

\newpage
\begin{notation} ~
\begin{itemize}
\item A process $q$ \emph{invokes $\W{x,  y}$}, if $q$ invokes an operation to write $(x,y)$ into~register~$R[q]$.
\item $v$ and $\bar{v}$ are values in $\{ 0,1\}$ such that $\bar{v} =1 -v$.
\end{itemize}
\end{notation}

\begin{observation}\label{observation0}
For every round $r \ge 1$:

\begin{enumerate}[(a)]

\item\label{zero} If a process invokes $\W{x ,r}$ then $x \in \{ 0,1, \bot\}$.

\item\label{uno1} For all $x \in \{ 0,1, \bot\}$, a process invokes $\W{x,r}$ at most once.

\item\label{uno2} No process invokes both $\W{0,r}$ and $\W{1,r}$.

\item\label{mono} If a process invokes $\W{- ,r}$ before it invokes $\W{- ,r'}$ for some $r'$ then $r \le r'$.

\item\label{due} If a process invokes $\W{\bot,r}$, then it invokes $\W{v,r}$\RMP{actually it completes $\W{v,r}$}
	for some $v \in \{0,1 \}$ before it invokes $\W{\bot,r}$.
\end{enumerate}

\end{observation}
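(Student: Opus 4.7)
The plan is to prove all five parts by straightforward inspection of Algorithm~\ref{Algocon}, noting that every write to $R[p]$ by process $p$ occurs at exactly one of four sites: line~1, line~\ref{aleader}, line~\ref{bot}, or line~\ref{acoin}. The values written are, respectively, $(v,1)$ with $v\in\{0,1\}$ (the proposed input); $(v^*,r+1)$ with $v^*\in\{0,1\}$, because the guard at line~\ref{lagree} demands that the leaders agree and the definition of ``agrees'' excludes $\bot$; $(\bot,r)$; and $(\text{flip}(),r+1)$ with $\text{flip}()\in\{0,1\}$. Since $p$ is the sole writer of $R[p]$ and executes sequentially, the read at line~\ref{capture} returns exactly the value $p$ most recently completed writing into $R[p]$. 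I would phrase this as the key invariant: whenever $p$ captures $(x,r)$ at line~\ref{capture}, the pair $(x,r)$ is $p$'s last completed write to $R[p]$. Part~(\ref{zero}) is then immediate from the case analysis of the four write sites.

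Part~(\ref{due}) falls directly out of the invariant: a call $\W{\bot,r}$ at line~\ref{bot} is guarded by $x\neq\bot$ at line~\ref{checkbot}, so the just-captured pair $(x,r)$ must have been produced by an earlier completed write $\W{v,r}$ of $p$ with $v\in\{0,1\}$, which is exactly the statement of (\ref{due}).

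For parts~(\ref{uno1}), (\ref{uno2}), and~(\ref{mono}) I would trace the evolution of the round field across $p$'s successive writes. From the invariant, every write after the initial $\W{v,1}$ uses either round $r$ (line~\ref{bot}) or round $r+1$ (lines~\ref{aleader} and~\ref{acoin}), where $r$ is the round of $p$'s previously written value; this yields non-decreasing rounds along $p$'s write sequence, proving (\ref{mono}). For (\ref{uno1}) and (\ref{uno2}), observe that line~\ref{bot} is the unique write site that preserves the round, and after a $\W{\bot,r}$ the next iteration captures $x=\bot$, so line~\ref{bot} is skipped and $p$'s next write must come from line~\ref{aleader} or line~\ref{acoin}, bumping the round to $r+1$. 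Hence for each round $r$ process $p$ performs at most one non-$\bot$ write at round $r$ (the unique line~1 invocation when $r=1$, or the single line~\ref{aleader}/line~\ref{acoin} invocation that first introduces round $r$) and at most one subsequent $\W{\bot,r}$; since the non-$\bot$ write at round $r$ carries a single value in $\{0,1\}$, this simultaneously establishes both (\ref{uno1}) and (\ref{uno2}).

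The main obstacle is not any deep argument but justifying cleanly that the local read at line~\ref{capture} of $R[p]$ always returns $p$'s last completed write. This requires a brief appeal to the definition of a regular register together with the fact that $p$, being the sole writer of $R[p]$ and executing its steps sequentially, never has an outstanding write concurrent with its own read of $R[p]$; only with this observation in hand can the captured $(x,r)$ be identified with the register's most recently written pair, which is what drives the rest of the case analysis.
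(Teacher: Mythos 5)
The paper states this as an Observation with no accompanying proof, leaving it to inspection of Algorithm~\ref{Algocon}; your by-inspection argument is correct and supplies exactly the intended justification. In particular, the one point worth making explicit is the invariant you isolate at the end: since $p$ is the sole writer of $R[p]$ and executes sequentially, its read of $R[p]$ is never concurrent with a write to $R[p]$, so regularity forces the read to return the value of the immediately preceding (i.e., last completed) write, and the five parts then follow from the case analysis of the four write sites.
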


\begin{lemma}\label{bl1}\RMP{Provide intuition to some key lemmas}
For all $r \ge 1$, if a process $p$ invokes $\W{v,r}$ and then invokes \mbox{$\W{\bar{v},r+1}$},\linebreak
	then some
	process $q \neq p$ invokes $\W{\bar{v},r'}$ with $r'\geq r$ before $p$ invokes $\W{\bar{v},r+1}$.
 \end{lemma}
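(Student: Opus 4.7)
The plan is to consider the iteration $I_2$ in which $p$ invokes $\W{\bar{v},r+1}$ and analyze which branch of the main loop $p$ takes there. Because $p$ has previously invoked $\W{v,r}$, Observation~\ref{observation0} implies that the pair $(x,r)$ captured at line~\ref{capture} in $I_2$ satisfies $x\in\{v,\bot\}$, and the only branches that write a pair of the form $(\cdot,r+1)$ are lines~\ref{aleader} and~\ref{acoin}. I would split on these two sub-cases.

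\textbf{Case 1: $p$ writes via line~\ref{aleader}.} Here the read at line~\ref{readall} showed that all leaders agree on $\bar{v}$. Since $p$'s captured preference is $v$ or $\bot$ (not $\bar{v}$), $p$ cannot itself be among those leaders, so some $q\neq p$ must be a leader whose register was read as $(\bar{v},r')$ with $r'\ge r$. By regularity of $R[q]$, this read must have been served by a $\W{\bar{v},r'}$ that $q$ invoked either before or concurrently with $p$'s read; in particular, that invocation precedes $p$'s subsequent $\W{\bar{v},r+1}$, which is the desired conclusion.

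\textbf{Case 2: $p$ writes via line~\ref{acoin}.} Now $x=\bot$, so $p$ has already invoked $\W{\bot,r}$ in some earlier iteration, and the coin happened to land on $\bar{v}$. Since the coin outcome itself gives no direct information about other processes, I would find the witness via the earliest pause at round $\ge r$. Let $s$ be the process whose invocation of $\W{\bot,r_s}$ with $r_s\ge r$ is first in time among all such invocations that precede $p$'s $\W{\bar{v},r+1}$; this set contains $p$'s own $\W{\bot,r}$, so it is nonempty. In the iteration where $s$ invokes $\W{\bot,r_s}$ via line~\ref{bot}, $s$ captures some $(v_s,r_s)$ with $v_s\in\{0,1\}$, and the guard of line~\ref{bot} tells us that the leaders did not agree in $s$'s read. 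Minimality of $s$ rules out any leader in that read with preference $\bot$ (such a leader would itself have invoked $\W{\bot,\cdot}$ at round $\ge r_s\ge r$ before $s$'s read, contradicting the choice of $s$), so the disagreement must come from some leader $q$ in $s$'s read with preference $\overline{v_s}$ at some round $\ge r_s$.

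To conclude I would split on $v_s$. If $v_s=v$ (which holds automatically when $s=p$), then $q$ has invoked $\W{\bar{v},r'}$ with $r'\ge r$; a short temporal chain—regularity places that invocation before $s$'s read, minimality of $s$ places $s$'s $\W{\bot,r_s}$ no later than $p$'s $\W{\bot,r}$, and the order of $p$'s own invocations places $\W{\bot,r}$ before $\W{\bar{v},r+1}$—puts $q$'s invocation before $p$'s $\W{\bar{v},r+1}$, and Observation~\ref{observation0} forces $q\neq p$, because $p$ has no prior write of $\bar{v}$ at round $\ge r$. If instead $v_s=\bar{v}$, then $s\neq p$ (since $p$'s own captured value at its pause at round $r$ is $v$), and Observation~\ref{observation0}(e) applied to $s$ immediately supplies $s$'s $\W{\bar{v},r_s}$, invoked before $\W{\bot,r_s}$, as the required witness. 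The main technical obstacle I anticipate is this coin-flip case: cleanly chaining the temporal orderings of the reads, pauses, and the $\W{\bar{v},r+1}$ write, together with regularity of the registers, so that the earliest-pauser argument is watertight.
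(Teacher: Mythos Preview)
Your proof is correct, but it is organised differently from the paper's. The paper does \emph{not} split on which line $p$ uses to write $(\bar v,r+1)$. Instead it first shows (its Claim~\ref{bc0}) that, regardless of whether $p$ took line~\ref{aleader} or line~\ref{acoin}, $p$ must have read $(\bar v,r')$ or $(\bot,r')$ from some leader $\ell$, and then it splits on whether $\ell$ had $\bar v$ or $\bot$. In the $\bot$ sub-case the paper traces back to the first process that invoked $\W{\bot,r'}$ for that \emph{fixed} round $r'$.

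Your decomposition is different: in your Case~2 you never look at what $p$ read from the leaders; you use only the fact that $x=\bot$ forces $p$ itself to have invoked $\W{\bot,r}$, and you take the earliest pause over \emph{all} rounds $\ge r$. This choice of ``first pauser'' is arguably cleaner, because the minimality argument that rules out a leader showing $\bot$ is immediate (any such leader would be an earlier pause at round $\ge r$), whereas with the paper's fixed-$r'$ choice one has to be a bit more careful when the leader's round $r^*$ strictly exceeds $r'$. One small simplification you could make: once you know the leaders that $s$ sees all have non-$\bot$ preferences and disagree, some leader already has preference $\bar v$; you can take that leader directly as the witness and avoid the split on $v_s$ altogether.
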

\begin{proof}
Let $r \ge 1$ and suppose a process $p$ invokes $\W{v,r}$ and then invokes $\W{\bar{v},r+1}$.
Consider the while iteration in which $p$ invokes $\W{\bar{v},r+1}$.
Note that $p$ completes its $\W{v,r}$ operation before it reads the registers $R[-]$ in line~\ref{readall} of that iteration.
Thus, since $R[p]$ is a regular register,
    $p$ reads $(-,r)$ from $R[p]$ in line~\ref{readall}.
    
Let $r'$ be the round of the leaders that $p$ sees in line~\ref{readall}, i.e.,
	$p$ reads $(-,r')$ from the register of every leader in line~\ref{readall}.
Since $p$ reads $(-,r)$ from $R[p]$, by the definition of leaders, $r' \geq r$.

\begin{claim}\label{bc0}
$p$ reads either $(\bar{v},r')$ or $(\bot,r')$ from the register $R[\ell]$ of some leader $\ell$.
\end{claim}

\begin{proof}
Note that $p$ invokes $\W{\bar{v},r+1}$ by executing either line~\ref{aleader} or line~\ref{acoin}.

\begin{itemize}
\item \textbf{Case 1:} If $p$ executes line~\ref{aleader}, then
	by the condition of line~\ref{lagree}, $p$ must have seen that the leaders agree on $\bar{v}$,
	so, $p$ reads $(\bar{v},r')$ from all the leaders.
	
\item \textbf{Case 2:} if $p$ executes line~\ref{acoin},
	then, by the condition of line~\ref{lagree},
	it must have seen that the leaders do not agree.
So, either $p$ reads $(\bot,r')$ from the register of at least one leader,
	or $p$ reads both $(v,r')$ and $(\bar{v},r')$ from the registers of some leaders.

\end{itemize}

So in both cases the claim holds.
\end{proof}

Since $R[\ell]$ is a regular register, by Claim~\ref{bc0},
    $\ell$ invokes $\W{\bar{v},r'}$ or $\W{\bot,r'}$ before $p$ completes reading $R[\ell]$.
So $\ell$ invokes $\W{\bar{v},r'}$ or $\W{\bot,r'}$ before 
     $p$ invokes $\W{\bar{v},r+1}$.

\textbf{Case 1:}
 $\ell$ invokes $\W{\bar{v},r'}$ before $p$ invokes $\W{\bar{v},r+1}$.
We claim that $\ell \neq p$.
Suppose for contradiction that $\ell =p$.
Since $p$ invokes $\W{\bar{v},r'}$ before $p$ invokes $\W{\bar{v},r+1}$, and $r' \ge r$,
	by Observation~\ref{observation0}(\ref{uno1}) and~(\ref{mono}),
	$r' = r$.
So $p$ invokes both invokes $\W{\bar{v},r}$ and $\W{v,r}$ --- a contradiction to Observation~\ref{observation0}(\ref{uno2}).
%
%
%
%
%
Since $\ell$ invokes $\W{\bar{v},r'}$ before $p$ invokes $\W{\bar{v},r+1}$ and $\ell \neq p$, the lemma holds for~$q=\ell$.

\textbf{Case 2:} 
   $\ell$ invokes $\W{\bot,r'}$ before $p$ invokes $\W{\bar{v},r+1}$.
Let $f$ be the \emph{first} process that invokes $\W{\bot,r'}$; 
	recall that $r' \geq r$.
Note that $f$ invokes $\W{\bot,r'}$ in line~\ref{bot} of some while iteration.
Let $r^*$ be the round of the leaders that $f$ sees in line~\ref{readall}, i.e.,
	$f$ reads $(-,r^*)$ from the register of every leader in line~\ref{readall} of that iteration.
Since $f$ invokes $\W{\bot,r'}$ in line~\ref{bot},
	$f$ reads $(-,r')$ from $R[f]$ in line~\ref{readall}. 
So, by definition of leaders, the leaders that $f$ sees in line~\ref{readall} have $r^* \geq r'$; since $r' \geq r$, we have $r^* \geq r$.

Note that $f$ does not read $(\bot,r^*)$ from any leader's register in line~\ref{readall}: this is because,
	since the shared registers are regular, 
	that leader would have invoked $\W{\bot,r^*}$
	with $r^* \geq r$ before $f$ invokes $\W{\bot,r'}$ with $r' \geq r$ --- contradicting the definition of $f$.

Since $f$ invokes $\W{\bot,r'}$ in line~\ref{bot},
	by the condition of line~\ref{lagree},
 	$f$ sees that the leaders do not agree.
Since $f$ does not read $(\bot,r^*)$ from any leader's register in line~\ref{readall} but it sees that
	the leaders do not agree,
	$f$ must read both $(v, r^*)$ and $(\bar{v}, r^*)$ from some leaders' registers.
Let $q$ be a leader such that $f$ reads $(\bar{v}, r^*)$ from $R[q]$ in line~\ref{readall}.
Since $R[q]$ is a regular register,
	$q$ invokes $\W{\bar{v},r^*}$ before $f$ completes the reading of $R[q]$.
Thus, $q$ invokes $\W{\bar{v},r^*}$ before $f$ invokes $\W{\bot,r'}$ in line~\ref{bot}.
By the choice of $f$,
    $q$ invokes $\W{\bar{v},r^*}$ before $\ell$ invokes $\W{\bot,r'}$.
So, by the hypothesis of Case~2,
    $q$ invokes $\W{\bar{v},r^*}$ before $p$ invokes $\W{\bar{v},r+1}$.
Note that $q \neq p$ (the proof is the same as the proof of $\ell \neq p$ in Case 1).
Thus, since $r^*\geq r$, the lemma holds for $r' = r^*$.
\end{proof}

The next lemma generalizes Lemma~\ref{bl1}.

\begin{lemma}\label{bl1.5}
For all $r\ge1$, if a process $p$ invokes $\W{v,r}$ before it invokes $\W{\bar{v},r'}$ with $r' >r$,
	then some
	process $q \neq p$ invokes $\W{\bar{v},r''}$ with $r''\geq r$ before $p$ invokes $\W{\bar{v},r'}$.
 \end{lemma}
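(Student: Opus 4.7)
The plan is to reduce the lemma to the already-proved special case Lemma~\ref{bl1}. I would define $\W{\bar v, t}$ to be the \emph{first} write with value $\bar v$ invoked by $p$ after $\W{v, r}$; this exists because $\W{\bar v, r'}$ is itself such a write, so $t \le r'$. The goal will be to produce a witness $q \ne p$ invoking $\W{\bar v, r''}$ with $r'' \ge r$ before $p$ invokes $\W{\bar v, t}$. Since $\W{\bar v, t}$ occurs no later than $\W{\bar v, r'}$, any such $q$ and $r''$ also satisfy the conclusion stated for $\W{\bar v, r'}$.

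To apply Lemma~\ref{bl1}, I would exhibit a round $s \ge r$ at which $p$ invokes $\W{v, s}$ before invoking $\W{\bar v, s+1} = \W{\bar v, t}$. Let $\W{y, s}$ denote the write by $p$ immediately preceding $\W{\bar v, t}$; by Observation~\ref{observation0}(\ref{mono}) we have $s \ge r$, and by the minimality of $\W{\bar v, t}$ we have $y \ne \bar v$. In the while-iteration in which $p$ invokes $\W{\bar v, t}$, $p$ first reads $R[p]$ in line~\ref{readall}; since $R[p]$ is a regular register, $p$ is its only writer, and $\W{y, s}$ is the last write to $R[p]$ that $p$ has completed, this read must return $(y, s)$. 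Because $\bar v \ne \bot$, $p$ then invokes $\W{\bar v, t}$ in line~\ref{aleader} or line~\ref{acoin}, both of which use round one greater than the round just read; hence $t = s+1$.

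It remains to split on $y$. If $y = v$, then $p$ invokes $\W{v, s}$ before $\W{\bar v, s+1}$, and Lemma~\ref{bl1} applied at round $s$ produces a process $q \ne p$ that invokes $\W{\bar v, r''}$ with $r'' \ge s \ge r$ before $p$ invokes $\W{\bar v, s+1}$, as required. If $y = \bot$, then by Observation~\ref{observation0}(\ref{due}) process $p$ invokes $\W{v'', s}$ for some $v'' \in \{0,1\}$ before $\W{\bot, s}$; the minimality of $\W{\bar v, t}$ forces $v'' = v$, so Lemma~\ref{bl1} again applies and yields the same conclusion. The main obstacle I anticipate is the bookkeeping that pins down $t = s+1$: this step needs to combine the branching structure of the loop (all non-$\bot$ writes occur in lines~\ref{aleader} and~\ref{acoin} with round one greater than the round read from $R[p]$) with the regularity of $R[p]$ to conclude that $p$'s self-read in that iteration must return the round $s$ of its last completed write.
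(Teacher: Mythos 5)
Your proof is correct and follows essentially the same route as the paper's: both reduce the statement to Lemma~\ref{bl1} by locating a crossover round $s$ (the paper's $\hat{r}$) at which $p$ invokes $\W{v,s}$ and then $\W{\bar{v},s+1}$, the paper finding it by observing (via Observation~\ref{observation0}(\ref{due})) that $p$ performs a non-$\bot$ write at every round between $r$ and $r'$. Your additional bookkeeping pinning down $t=s+1$ from the loop structure and the regularity of $R[p]$ is sound, and simply makes explicit what the paper leaves implicit.
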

 
\begin{proof}
Let $r\ge1$ and suppose a process $p$ invokes $\W{v,r}$ and then invokes $\W{\bar{v},r'}$ with $r' >r$.
From Observation~\ref{observation0}(\ref{due}),
	for every round $j$, $r \le j \le r'$,
	$p$ invokes $\W{v,j}$ or $\W{\bar{v},j}$.
So there is a round $\hat{r}$, $r \le \hat{r} < r'$,
	such that $p$ invokes $\W{v,\hat{r}}$ and then invokes $\W{\bar{v},\hat{r}+1}$.
By Lemma~\ref{bl1},
	some process $q \neq p$ invokes $\W{\bar{v},r''}$ with $r''\geq \hat{r}$ before $p$ invokes $\W{\bar{v},\hat{r}+1}$.
Since $\hat{r} \ge r$, $r'' \ge r$, and since $\hat{r} < r'$, $p$ invokes $\W{\bar{v},\hat{r}+1}$
	no later than when it invokes $\W{\bar{v},r'}$.
So $q \neq p$ and $q$ invokes $\W{\bar{v},r''}$ with $r''\geq r$ before $p$ invokes $\W{\bar{v},r'}$.
\end{proof}

 \begin{lemma}\label{bl2}
For all $r \ge 1$,
	if no process invokes $\W{\bar{v},r}$ by some time in the execution of the algorithm,
	no process invokes $\W{\bar{v},r'}$ for any $r'\geq r$ by that time.
\end{lemma}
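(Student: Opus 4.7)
The plan is to prove the contrapositive: if some process invokes $\W{\bar{v},r'}$ with $r'\ge r$ by some time $t$, then some process invokes $\W{\bar{v},r}$ by time $t$. I will do this by a ``first offender'' argument whose punch line is an appeal to Lemma~\ref{bl1.5}.

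Assume for contradiction that by time $t$ some process has invoked $\W{\bar{v},r'}$ with $r'\ge r$, but no process has invoked $\W{\bar{v},r}$. Among all invocations $\W{\bar{v},r^*}$ with $r^*\ge r$ that have occurred by time $t$, pick the one issued \emph{earliest} in real time; say process $p$ invokes it at time $t^*\le t$. By the contradiction hypothesis $r^*>r$, and by the minimality of $t^*$ no process invokes $\W{\bar{v},r''}$ with $r''\ge r$ strictly before $t^*$. Because the round component written by $p$ grows by at most one per write (Observation~\ref{observation0}(\ref{mono})) and starts at $1$ via the initial $\W{v,1}$, $p$ must ``traverse'' every round $1,2,\dots,r^*$ on its way to $r^*$, and the first write $p$ performs at each round $j$ is a pair in $\{0,1\}\times\{j\}$ (the initial write for $j=1$, and line~\ref{aleader} or line~\ref{acoin} for $j>1$). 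In particular, at round $r$ the process $p$ invokes either $\W{v,r}$ or $\W{\bar{v},r}$; the latter is ruled out by the contradiction hypothesis, so $p$ invokes $\W{v,r}$. Lemma~\ref{bl1.5} applied to $p$ with round pair $(r,r^*)$ then produces a process $q\neq p$ that invokes $\W{\bar{v},r''}$ with $r''\ge r$ before $p$ invokes $\W{\bar{v},r^*}$, i.e., strictly before $t^*$ — contradicting the minimality of $t^*$.

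The only mildly delicate step is showing that $p$ actually invokes $\W{v,r}$ (rather than merely failing to invoke $\W{\bar{v},r}$), since Lemma~\ref{bl1.5} has that write as a hypothesis. This forces me to open the algorithm just enough to note that each new round is entered by writing a $\{0,1\}$-valued pair (in lines~\ref{aleader} and~\ref{acoin}), so $p$'s passage through round $r$ necessarily produces either $\W{v,r}$ or $\W{\bar{v},r}$. Once this is in place, Lemma~\ref{bl1.5} does all the heavy lifting, and the entire argument reduces to an extremalization over the time of invocation.
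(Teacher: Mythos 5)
Your proof is correct and follows essentially the same route as the paper's: take the earliest invocation of $\W{\bar{v},r^*}$ with $r^*\ge r$, argue that the offending process must have invoked $\W{v,r}$ beforehand (the paper gets this from Observation~\ref{observation0}(\ref{due}) where you unfold the algorithm directly), and then apply Lemma~\ref{bl1.5} to contradict minimality. The only nit is that Observation~\ref{observation0}(\ref{mono}) states monotonicity of rounds, not that rounds grow by at most one per write, but the traversal fact you need is immediate from the code and the argument stands.
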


 \begin{proof}
Assume,
    for contradiction,
    there is a round $r \geq 1$ and a time $t$ such that
    no process invokes $\W{\bar{v},r}$ by time $t$,
	but some processes invoke $\W{\bar{v},r'}$ with $r'\geq r$ by time $t$. 
Let $p$ be the \emph{first} process that invokes $\W{\bar{v},r'}$ with $r'\geq r$; this must occur by time $t$.
Since no process invokes $\W{\bar{v},r}$ by time $t$,
	$r' > r$.
Note that $p$ invokes $\W{-,r}$ before $p$ invokes $\W{\bar{v},r'}$.
So, by Observation~\ref{observation0}(\ref{due}),
	$p$ invokes $\W{w,r}$ with a value $w \in \{ v,\bar{v} \}$ before it invokes $\W{\bar{v},r'}$ by time $t$.
 Since no process invokes $\W{\bar{v},r}$ by time $t$,
	$w = v$.
By Lemma \ref{bl1.5},
    some process $q \neq p$ invokes $\W{\bar{v},r''}$ with $r'' \geq r$ before $p$ invokes $\W{\bar{v},r'}$
    --- a contradiction since $p$ is the \emph{first} process that invokes $\W{\bar{v},r'}$ with $r' \geq r$ by time $t$.
\end{proof}

We now prove a similar lemma for the special value $\bot$.

 \begin{lemma}\label{bl2'}
 For all $r \ge 1$,
	if no process invokes $\W{\bar{v},r}$  by some time in the execution of the algorithm,
	no process invokes $\W{\bot,r'}$ for any $r'\geq r$ by that time.
\end{lemma}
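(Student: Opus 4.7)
The plan is to proceed by contradiction, closely paralleling the proof of Lemma~\ref{bl2}. Suppose some process invokes $\W{\bot,r'}$ with $r' \geq r$ by time $t$, and let $p$ be the \emph{first} such process. Then $p$ must invoke $\W{\bot,r'}$ in line~\ref{bot} of some while iteration, and I would derive a contradiction by examining what $p$ reads in line~\ref{readall} of that iteration. By Observation~\ref{observation0}(\ref{due}), $p$ has completed a $\W{w,r'}$ with $w \in \{0,1\}$ before invoking $\W{\bot,r'}$. Applying Lemma~\ref{bl2} to the hypothesis that no process invokes $\W{\bar{v},r}$ by time $t$, no process invokes $\W{\bar{v},r'}$ by time $t$ either; since $p$'s $\W{w,r'}$ occurs by time $t$, we must have $w = v$. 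Because $R[p]$ is a regular SWMR register written only by $p$, and $p$ is sequential, its read of $R[p]$ in line~\ref{readall} returns $(v,r')$. Hence, if $r^*$ denotes the round of the leaders $p$ sees in that iteration, then $r^* \geq r' \geq r$ by the definition of a leader.

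Next I would use that $p$ took the branch of line~\ref{bot}, so the leaders $p$ sees do not agree on any value. I would first rule out that any leader appears to $p$ as $(\bot,r^*)$: if some leader $\ell$ did, then by regularity $\ell$ would have invoked $\W{\bot,r^*}$ before $p$ invoked $\W{\bot,r'}$; moreover $\ell \neq p$ since $p$ reads $(v,r')$ from $R[p]$; combined with $r^* \geq r$, this contradicts the minimality of $p$. Therefore every leader seen by $p$ carries a non-$\bot$ value, and disagreement forces some leader $q$ whose register $p$ reads as $(\bar{v},r^*)$. By regularity, $q$ invokes $\W{\bar{v},r^*}$ before $p$ invokes $\W{\bot,r'}$, i.e., by time $t$, with $r^* \geq r$---contradicting Lemma~\ref{bl2}.

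The step I expect to be the main obstacle is the first one: carefully justifying that $p$ reads exactly $(v,r')$ from its own register. This is the one place where Lemma~\ref{bl2} is crucially invoked, to exclude the possibility $w = \bar{v}$ in Observation~\ref{observation0}(\ref{due}). Once this own-register read is pinned down, the two failure modes for leader disagreement (a leader at $\bot$ versus leaders split across $v$ and $\bar{v}$) both reduce, via the regularity-to-invocation inference used repeatedly above, to contradictions with either the minimality of $p$ or Lemma~\ref{bl2}.
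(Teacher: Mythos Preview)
Your proposal is correct and follows essentially the same contradiction argument as the paper: pick the first process $p$ to invoke $\W{\bot,r'}$ with $r'\ge r$, observe that the leaders $p$ sees have round at least $r$, and then show that the disagreement $p$ witnesses among leaders is impossible by splitting into the $(\bot,\cdot)$ and $(\bar v,\cdot)$ cases.

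The only noteworthy difference is cosmetic. You invest effort in proving that $p$ reads exactly $(v,r')$ from $R[p]$ (invoking Observation~\ref{observation0}(\ref{due}) and Lemma~\ref{bl2} to pin down $w=v$), and you flag this as the ``main obstacle.'' In fact the paper needs only that $p$ reads $(-,r')$ from $R[p]$, which is immediate from the algorithm: $p$ writes $(\bot,r)$ in line~\ref{bot} with $r$ taken from line~\ref{capture}, so the round component read from $R[p]$ must be $r'$. That alone yields $r^*\ge r'\ge r$. The paper then eliminates the two disagreement cases in the opposite order from yours---it first rules out $(\bar v,r^*)$ via Lemma~\ref{bl2} and then derives the minimality contradiction from a leader at $(\bot,r^*)$---but the content is identical. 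Your extra step is harmless, and it does give you the clean justification that $\ell\ne p$ in the $(\bot,r^*)$ case; the paper leaves that implicit (it follows anyway, since $p$'s own register carries $x\ne\bot$ by the line~\ref{checkbot} guard).
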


\begin{proof}
Assume,
    for contradiction,
    there is a round $r \geq 1$ and a time $t$ such that
    no process invokes $\W{\bar{v},r}$ by time $t$,
    but some process invokes $\W{\bot,r'}$ with $r'\geq r$ by time $t$. 
Let $p$ be the \emph{first} process that invokes $\W{\bot,r'}$ with $r'\geq r$ by time $t$.
Note that $p$ does so in line~\ref{bot} for some while iteration.
Let $r''$ be the round of the leaders that $p$ sees in line~\ref{readall}, i.e.,
	$p$ reads $(-,r'')$ from the register of every leader in line~\ref{readall} of that iteration.
Since $p$ invokes $\W{\bot,r'}$ in line~\ref{bot},
	$p$ reads $(-,r')$ from $R[p]$ in line~\ref{readall} by time $t$.
So, by definition of leaders, the leaders that $p$ sees in line~\ref{readall} have $r'' \geq r'$;
	since $r' \geq r$, we have $r'' \geq r$.

Since no process invokes $\W{\bar{v},r}$ by time $t$, and $r'' \geq r$,
    by Lemma \ref{bl2},
    no process invokes $\W{\bar{v},r''}$ by time $t$.
So, since the shared registers are regular, 
	$p$ cannot read $(\bar{v},r'')$ from any register $R[-]$ in line~\ref{readall} by time $t$.

Since $p$ invokes $\W{\bot,r'}$ in line~\ref{bot},
	by the condition of line~\ref{lagree},
    $p$ must see that the leaders do not agree.
So, since $p$ does not read $(\bar{v},r'')$ from any register $R[-]$ in line~\ref{readall},
    $p$ sees the leaders do not agree because it reads $(\bot,r'')$
    from the register $R[\ell]$ for at least one leader $\ell$ by time $t$.
Since $R[\ell]$ is a regular register,
    $\ell$ invokes $\W{\bot,r''}$ before $p$ completes its reading of $R[\ell]$ in line~\ref{readall}.
So,
    $\ell$ invokes $\W{\bot,r''}$ with $r'' \geq r$ before $p$ invokes $\W{\bot,r'}$ in line~\ref{bot}, and $\ell$ does so by time $t$ --- a contradiction since $p$ is the \emph{first} process that invokes $\W{\bot,r'}$ with $r'\geq r$ by time $t$.
\end{proof}

By Lemmas~\ref{bl2} and \ref{bl2'}, and Observation~\ref{observation0}(\ref{zero}), we have the following:

\begin{corollary}\label{bc2.2}
For all $r \ge 1$,
	if no process invokes $\W{\bar{v},r}$,\RMP{by some time t in the run}
	then for all $r'\geq r$,
	every process that invokes $\W{-,r'}$\RMP{by time t in the run}
	invokes $\W{v,r'}$.
\end{corollary}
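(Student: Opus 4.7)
The plan is to observe that the corollary is a direct bookkeeping consequence of the two preceding lemmas together with the trichotomy given by Observation~\ref{observation0}(\ref{zero}), so the proof is essentially a case elimination on the value written.

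Concretely, fix any $r \ge 1$ with the property that no process invokes $\W{\bar{v},r}$ in the execution, fix any $r' \ge r$, and fix any process $p$ that invokes $\W{-,r'}$. By Observation~\ref{observation0}(\ref{zero}), the value $x$ that $p$ writes in this invocation lies in $\{0,1,\bot\}$, i.e., $x \in \{v, \bar{v}, \bot\}$. The goal is to rule out the last two possibilities.

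First I would invoke Lemma~\ref{bl2}: since no process ever invokes $\W{\bar{v},r}$, applying the lemma at any time $t$ beyond the invocation of $p$'s write gives that no process invokes $\W{\bar{v},r'}$ for any $r' \ge r$; in particular $p$'s write is not $\W{\bar{v},r'}$, so $x \ne \bar{v}$. Next I would invoke Lemma~\ref{bl2'} in exactly the same way, concluding that no process invokes $\W{\bot,r'}$ for any $r' \ge r$, and hence $x \ne \bot$. The only remaining option is $x = v$, which is what the corollary asserts.

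There is no real obstacle here; the only subtlety worth flagging is the ``by some time'' phrasing in Lemmas~\ref{bl2} and~\ref{bl2'} versus the time-free phrasing of the corollary. I would handle this by applying the lemmas at a time $t$ chosen to be any time after $p$'s invocation of $\W{-,r'}$, which is legitimate because the hypothesis ``no process invokes $\W{\bar{v},r}$'' holds at every time in the execution.
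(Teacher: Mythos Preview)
Your proof is correct and follows exactly the paper's approach: the paper simply states that the corollary follows from Lemmas~\ref{bl2} and~\ref{bl2'} together with Observation~\ref{observation0}(\ref{zero}), which is precisely the case elimination you carry out. Your remark about instantiating the ``by some time $t$'' hypothesis at a time after $p$'s invocation is the right way to reconcile the phrasing, and there is nothing to add.
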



\begin{definition}\label{def-decide}
If a process $p$ executes line~\ref{decide} when $R[p] = (v,r)$, we say that $p$ decides~$v$ at round $r$.
\end{definition}

 We first show that the algorithm satisfies the \textbf{Validity} property.  

\begin{lemma}[Validity]\label{validity}
If a process decides $v$ then some process proposes $v$.
\end{lemma}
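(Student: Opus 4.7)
The plan is to establish the invariant that for every write invocation $\W{x, r}$ occurring in the execution, either $x = \bot$ or some process has proposed $x$. Given this invariant, the lemma is immediate: if $p$ decides $x$ at round $r$, then by Definition~\ref{def-decide}, $p$ executes line~\ref{decide} with $R[p] = (x, r)$ and $x \in \{0, 1\}$; since $R[p]$ is single-writer (with sole writer $p$), regularity forces $(x, r)$ to come from a prior invocation $\W{x, r}$ by $p$ (the initial value $(\bot, 0)$ being excluded because $x \neq \bot$), and the invariant then yields a process that proposed $x$.

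To prove the invariant, I would argue by contradiction. Let $w = \W{x, r}$ be the \emph{earliest} (by invocation time) write that violates it, so $x \in \{0, 1\}$ and no process has proposed $x$. Inspect which line of the algorithm produced $w$. If $w$ is the initial write (line 1), then $x$ is the writer's own proposed value, impossible. If $w$ is in line~\ref{bot}, then $x = \bot$, impossible. If $w$ is in line~\ref{aleader} with $x = v^*$, then by the definition of leaders agreeing, $v^* \neq \bot$ and the writer had read $(v^*, r')$ from some leader's register; regularity then gives a prior $\W{v^*, r'}$ invocation on that register, an invocation preceding $w$, contradicting the minimality of $w$.

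The remaining and main case is line~\ref{acoin}, where $x = \text{flip}() \in \{0,1\}$. Here the writer $p$ had $R[p].prefer = \bot$ at line~\ref{capture}, so $p$ itself earlier invoked some $\W{\bot, -}$. Let $t_0$ be the invocation time of the \emph{first} $\W{\bot, -}$ in the whole execution, performed by some process $q$ in line~\ref{bot}; note $t_0$ is earlier than $w$. At $q$'s preceding read in line~\ref{readall}, no $\W{\bot, -}$ had yet been invoked; hence for every leader $\ell$ (in $q$'s view), the value $(y, -)$ returned from $R[\ell]$ satisfies $y \neq \bot$, because regularity would otherwise require a preceding or concurrent $\W{\bot, -}$ on $R[\ell]$, and the initial $(\bot, 0)$ is ruled out since a leader has round $\geq 1$. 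Each such read is served by an invocation strictly earlier than $w$, so by minimality of $w$, each $y$ is proposed. Because $q$ reaches line~\ref{bot}, the leaders do not agree in line~\ref{lagree}, which, together with $y \in \{0, 1\}$ for every leader, forces both $0$ and $1$ to appear among them. Hence both values are proposed, so $\text{flip}() \in \{0, 1\}$ is proposed, contradicting the choice of $w$.

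I expect the flip case to be the main technical hurdle: one must trace from the current coin-flip write back to the very first $\W{\bot, -}$ invocation in the execution, and then exploit the disagreement condition witnessed at that earliest $\bot$-invocation to conclude that, before any $\bot$ had entered the system, the only way for leaders to disagree was to have both $0$ and $1$ already present in their registers, hence both possible coin outcomes must already have been proposed. The remaining cases are straightforward once the minimality of $w$ and the regularity of $R[\ell]$ are in place.
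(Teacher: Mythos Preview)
Your proof is correct but follows a genuinely different route from the paper's. The paper proves Validity as a quick consequence of Corollary~\ref{bc2.2}: assuming no process proposes $v$, no process invokes $\W{v,1}$, and then Corollary~\ref{bc2.2} (which in turn rests on the chain Lemmas~\ref{bl1}--\ref{bl2'}) forces every $\W{-,r}$ to be $\W{\bar v,r}$, so $R[p]$ never holds $v$ and $p$ cannot decide $v$. Your argument instead establishes directly, by a minimal-counterexample induction over write invocations, that every non-$\bot$ value ever written was proposed; the crux is your handling of the \textsc{flip} case, where you trace back to the very first $\W{\bot,-}$ in the execution and observe that at that moment the disagreeing leaders must exhibit both $0$ and $1$ (since no $\bot$ is yet visible among leaders), whence both values were already proposed. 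Your approach is self-contained and more elementary for Validity in isolation---it does not need Lemmas~\ref{bl1}--\ref{bl2'} or Corollary~\ref{bc2.2}---whereas the paper's approach is shorter \emph{given} that machinery, which the paper needs anyway for Agreement. One minor point worth making explicit in your write-up: when $p$ decides, $x\neq\bot$ because with $R[p].prefer=\bot$ the process $p$ would disagree with itself and hence could not satisfy the condition of line~\ref{notdecide}; you assert $x\in\{0,1\}$ but do not justify it.
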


\begin{proof}
Suppose, for contradiction, that some process $p$ decides $v$
	but all the processes that propose a value, propose $\bar{v}$.
So they all invoke $\W{\bar{v},1}$, and by Observation~\ref{observation0}(\ref{uno2}) no process invokes $\W{v,1}$.
By Corollary~\ref{bc2.2}, for all $r\geq 1$,
	every process that invokes $\W{-,r}$
	invokes $\W{\bar{v},r}$.
So the register $R[p]$ can only contain the value $\bar{v}$ or $\bot$.
Note that when $p$ decides (this occurs in line~\ref{decide}),
	$p$ decides a value $x$ only if $x$ is in $R[p]$ and $x \neq \bot$.
Thus $p$ can only decide $\bar{v}$ --- a contradiction.
\end{proof}

We now proceed to show that the algorithm satisfies the \textbf{Agreement} property.

\begin{lemma}\label{bl3.5}
For all round $r\geq 1$, if some process $p$ decides $v$ at round $r$,
    no process invokes $\W{\bar{v},r}$.
\end{lemma}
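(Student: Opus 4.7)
The plan is a proof by contradiction: assume $p$ decides $v$ at round $r$ while some process $q$ invokes $\W{\bar v, r}$. From Definition~\ref{def-decide} and lines~\ref{capture}--\ref{decide}, in the deciding iteration $p$ captures $(v,r)$ from $R[p]$, so regularity of $R[p]$ (whose sole writer is $p$) forces $p$ to have invoked $\W{v,r}$; Observation~\ref{observation0}(\ref{uno2}) then gives $q \ne p$. The decision condition at line~\ref{notdecide} further requires each read $p$ performs in line~\ref{readall} to return some $(x_j, s_j)$ with $s_j \le r$, and with $x_j = v$ whenever $s_j \in \{r-1, r\}$.

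The first case to dispatch is $p$'s read of $R[q]$ when the returned round equals $r$: the decision condition forces $x_q = v$, and regularity then requires $q$ to have invoked $\W{v, r}$, contradicting Observation~\ref{observation0}(\ref{uno2}) since $q$ also invokes $\W{\bar v, r}$. So $p$'s read of $R[q]$ must return a round $\le r-1$; by Observation~\ref{observation0}(\ref{mono}) and regularity, this can happen only if the invocation of $\W{\bar v, r}$ is either concurrent with $p$'s read (and regularity returns an earlier preceding value, since returning $(\bar v, r)$ would itself violate the decision condition) or is invoked entirely after $p$ finishes reading $R[q]$.

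To close this subtle case, I would pick $q^\ast$ to be the \emph{first} process in the execution to invoke $\W{\bar v, r}$, at time $t^\ast$. Lemmas~\ref{bl2} and~\ref{bl2'} imply that, before $t^\ast$, no $\W{\bar v, r'}$ or $\W{\bot, r'}$ with $r' \ge r$ has been invoked, so every write at round $\ge r$ invoked before $t^\ast$ is a $\W{v, \cdot}$. For $r = 1$, the line-$1$ write of $q^\ast$ must be invoked after $p$ finishes reading $R[q^\ast]$ (otherwise the earlier case applies), so $p$ reads the initial $(\bot, 0)$ by regularity, which disagrees with $p$ and fails the trail-by-$2$ condition at round~$1$. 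For $r > 1$, $q^\ast$ invokes $\W{\bar v, r}$ at line~\ref{aleader} or~\ref{acoin}; in the line-\ref{aleader} sub-case, minimality of $q^\ast$ forces the leader $\ell$ that $q^\ast$ sees with prefer~$\bar v$ to be at round exactly $r-1$, which in turn forces $q^\ast$ itself to be a leader with prefer~$\bar v$ and thus to have invoked $\W{\bar v, r-1}$; together with $p$'s view of $R[q^\ast]$ (which, if it shows round $r-1$, must show prefer $v$, contradicting Observation~\ref{observation0}(\ref{uno2})) and, when needed, Lemma~\ref{bl1.5} applied at round $r-1$, this yields the contradiction. The line-\ref{acoin} case is analogous, using Observation~\ref{observation0}(\ref{due}) for $q^\ast$'s required $\W{\bot, r-1}$.

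The main obstacle is precisely this new--old inversion: $p$'s concurrent read of $R[q]$ may return a stale pre-$\W{\bar v, r}$ value, so the inconsistency cannot be read off $p$'s view of $R[q]$ alone. The argument must weave together the minimality of $q^\ast$, regularity, the structure of lines~\ref{aleader}/\ref{acoin}, and Lemmas~\ref{bl1} and~\ref{bl1.5} to force the contradiction.
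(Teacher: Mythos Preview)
Your overall plan matches the paper's: assume for contradiction that $p$ decides $v$ at round $r$ while some process invokes $\W{\bar v,r}$; take $q^\ast$ to be the \emph{first} such process; use minimality of $q^\ast$ together with Lemmas~\ref{bl2} and~\ref{bl2'} to pin the leaders $q^\ast$ sees at round exactly $r-1$; and conclude that $q^\ast$ has completed $\W{\bar v,r-1}$ (line-\ref{aleader} case) or $\W{\bot,r-1}$ (line-\ref{acoin} case) before it reads. All of this is correct and is exactly what the paper does.

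The gap is in what follows. You have only established that $p$'s read of $R[q^\ast]$ returns some round $\le r-1$. Two sub-cases are not closed:
\begin{itemize}
\item In either case, $p$ might read a round $\le r-2$ from $R[q^\ast]$; nothing you have written rules this out, and such a value need not violate the decision condition.
\item In the line-\ref{acoin} case, $q^\ast$ may well have written $\W{v,r-1}$ immediately before its $\W{\bot,r-1}$ (Observation~\ref{observation0}(\ref{due}) only guarantees \emph{some} $w\in\{0,1\}$), so even if $p$ reads round $r-1$ it could read $(v,r-1)$, which satisfies the decision condition and yields no contradiction with Observation~\ref{observation0}(\ref{uno2}). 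Your ``analogous'' does not cover this.
\end{itemize}
The paper closes both sub-cases with a timing argument you omit: since the leaders $q^\ast$ sees are at round $r-1$, $q^\ast$ reads $(-,s)$ with $s\le r-1$ from $R[p]$; by regularity and Observation~\ref{observation0}(\ref{mono}) this forces $q^\ast$'s read of $R[p]$ to start before $p$ completes $\W{v,r}$; hence $q^\ast$'s $\W{\bar v,r-1}$ (resp.\ $\W{\bot,r-1}$) completes before $p$ completes $\W{v,r}$, and therefore before $p$ begins reading in its deciding iteration. Regularity and monotonicity then force $p$'s read of $R[q^\ast]$ to return a value with round $\ge r-1$ and, in the line-\ref{acoin} case, specifically not $(v,r-1)$ (since $\W{\bot,r-1}$ already precedes the read). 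That is the step that actually produces the contradiction with line~\ref{notdecide}.

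Your appeal to Lemma~\ref{bl1.5} ``at round $r-1$'' does not fill this gap: that lemma concerns a process that switches its own preference between two of its writes, and neither $p$ nor $q^\ast$ is known to do so here. The missing ingredient is the causality chain through $q^\ast$'s read of $R[p]$, not a preference-switching argument.
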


\begin{proof}
For $r=1$:
	Suppose
	 some process $p$ decides $v$ at round $1$.
By Definition~\ref{def-decide}, this means that $p$ has $R[p] = (v,1)$ when it executes line~\ref{decide}.
Since initially $R[p] = (\bot, 0)$,
    by the condition of line \ref{notdecide},
    $p$ reads $(v,1)$ from the register of every process in line~\ref{readall}.
Since the registers are regular,
    every process invokes $\W{v,1}$.
By Observation~\ref{observation0}(\ref{uno2}),
    no process invokes $\W{\bar{v},1}$.
Thus,
    by Corollary~\ref{bc2.2},
    for all $r \ge 1$,  no process invokes $\W{\bar{v},r}$.
    
For $r\geq 2$:
	Suppose, for contradiction,
    that some process $p$ decides $v$ at round $r$,
    but some process invokes $\W{\bar{v},r}$.
Let $q$ be the \emph{first} process that invokes $\W{\bar{v},r}$.

Since $p$ decides $v$ at round $r$ in some while iteration,\RMP{Name this iteration $I_p$ and use it later?}
	by Definition~\ref{def-decide},
	$p$ executes line~\ref{decide} when $R[p] = (v,r)$; so $p$ reads $(v,r)$ from $R[p]$ in line~\ref{readall} in that iteration.
So, since $R[p]$ is a regular register,
	it is clear that the following holds:

\begin{claim}\label{ob2}
$p$ completes $\W{v,r}$ before $p$ starts reading registers in line~\ref{readall} of the while iteration where $p$ decides $v$ at round $r$.
\end{claim}

Since $r \ge 2$, $q$ invokes $\W{\bar{v},r}$ by executing  either (1) line~\ref{aleader} or (2) line~\ref{acoin}.
We now prove that both cases are impossible (and so the lemma holds):
\begin{itemize}

\item
\textbf{Case 1}: $q$ invokes $\W{\bar{v},r}$ by executing line~\ref{aleader}.
Consider the while iteration in which $q$ invokes $\W{\bar{v},r}$ by executing line~\ref{aleader}.\RMP{Name this iteration $I_q$ and use it later?}
Process $q$ reads all the registers in line~\ref{readall} of that iteration. 
\begin{claim}\label{bc161}
$q$ is a leader and $q$ reads $(\bar{v},r-1)$ from the register of every leader in line~\ref{readall}.
\end{claim}

\begin{proof}
Since $q$ invokes $\W{\bar{v},r}$ in line~\ref{aleader} of that iteration,
    $q$ previously reads $(-,r-1)$ from $R[q]$ in line~\ref{readall}; moreover,
    by the condition of line~\ref{lagree},
    $q$ reads $(\bar{v},r')$ for some $r'$ from the register of every leader.
So, by the definition of leaders, $r' \geq r-1$.
By the definition of $q$,
    no process invokes $\W{\bar{v},r}$ before 
    $q$ invokes $\W{\bar{v},r}$,
    and so
    no process invokes $\W{\bar{v},r}$ before 
    $q$ completes reading registers in line~\ref{readall}.
By Lemma~\ref{bl2},
    no process invokes $\W{\bar{v},j}$ for any $j\geq r$ before $q$ completes reading registers in line~\ref{readall}.
Thus,
    since $q$ reads $(\bar{v},r')$ with $r' \geq r-1$ from every leader's register in line~\ref{readall},
    and the registers are regular,
    $q$ reads $(\bar{v},r-1)$ from every leader's register.
Recall that $q$ reads $(-,r-1)$ from $R[q]$ in line~\ref{readall}.
So,
    by the definition of leaders,
    $q$ is a leader.
\end{proof}
By Claim~\ref{bc161}, $q$ reads $(\bar{v},r-1)$ from $R[q]$ in line~\ref{readall}.
Since $R[q]$ is a regular register,
	$q$ previously invoked $\W{\bar{v},r-1}$.
Therefore we have:

\begin{claim}\label{bc161.5}
$q$ completes $\W{\bar{v},r-1}$ before $q$ starts reading registers in line~\ref{readall}.
\end{claim}

\begin{claim}\label{bc162}
$q$ completes $\W{\bar{v},r-1}$ before $p$ completes $\W{v,r}$.
\end{claim}

\begin{proof}
From Claim~\ref{bc161},
    $q$ reads $(-,r')$ with some $r'\leq r-1$ from $R[p]$ in line~\ref{readall}.
Since $r'\leq r-1$,
	by Observation~\ref{observation0}(\ref{mono}),
$p$ invokes $\W{-,r'}$ before invoking $\W{v,r}$.
Since $q$ reads $(-,r')$ from $R[p]$,
    $q$ starts reading registers in line~\ref{readall} before $p$ completes $\W{v,r}$.
By Claim~\ref{bc161.5},
   $q$ completes $\W{\bar{v},r-1}$ before $p$ completes $\W{v,r}$.
\end{proof}

Now consider the while iteration in which $p$ decides $v$ at round $r$.

\begin{claim}\label{bc163}
$p$ reads $(\bar{v},r')$ or $(\bot,r')$ with $r' = r-1$ or $r$ from $R[q]$ in line~\ref{readall} of the while iteration where $p$ decides $v$ at round $r$.\XMP{more clear if possible}
\end{claim}

\begin{proof}
By Claim~\ref{bc162},
    $q$ completes $\W{\bar{v},r-1}$ before $p$ starts reading registers in line~\ref{readall}.
Since $R[q]$ is a regular register,
	by Observation~\ref{observation0}(\ref{mono}),
    $p$ reads $(-,r')$ with $r' \geq r-1$ from $R[q]$ in line~\ref{readall}. 
Since $p$ decides $v$ at round $r$ in that iteration,
	(1) $p$ has $R[p]=(v,r)$ in line~\ref{readall}, and
	(2) by the condition of line~\ref{notdecide}, $p$ is a leader.
Thus, $p$ does not read $(-,j)$ for any $j > r$ from any register in line~\ref{readall}.
This implies that $p$ reads $(-,r')$ with $r' = r-1$ or $r$ from $R[q]$ in line~\ref{readall} in that iteration.

By the definition of $q$, process $q$ invokes $\W{\bar{v},r}$; by Claim~\ref{bc161.5}, $q$ completes $\W{\bar{v},r-1}$.
So, by Observation~\ref{observation0}(\ref{uno2}),
    $q$ does not invoke $\W{v,r'}$ with $r' = r-1$ or $r$.
Since $R[q]$ is a regular register,
	when $p$ reads $(-,r')$ from $R[q]$ in line~\ref{readall},
   $p$ reads $(\bar{v},r')$ or $(\bot,r')$ with $r' = r-1$ or $r$. 
\end{proof}

When $p$ executes line~\ref{notdecide} with $R[p]=(v,r)$, by Claim \ref{bc163},
	 $p$ sees that it disagrees with process $q$,
	and $q$ trails by at most one round, and so $p$ does \emph{not} decide in line~\ref{decide} --- a contradiction.
\item
\textbf{Case 2}: $q$ invokes $\W{\bar{v},r}$ by executing line \ref{acoin}.
Consider the while iteration in which $q$ invokes $\W{\bar{v},r}$ by executing line~\ref{acoin}.

\begin{claim}\label{bc163.5}
$q$ completes $\W{\bot,r-1}$ before it starts reading registers in line~\ref{readall}.
\end{claim}

\begin{proof}
Since $q$ invokes $\W{\bar{v},r}$ by executing line \ref{acoin},
    by the condition of line~\ref{checkbot},
    $q$ reads $(\bot,r-1)$ from $R[q]$ in line \ref{readall}.
Thus,
	since $R[q]$ is a regular register,
    $q$ completes $\W{\bot,r-1}$ before it starts reading registers in line~\ref{readall}.
\end{proof}

\begin{claim}\label{bc164}
$q$ reads $(-,r-1)$ from the register of every leader in line~\ref{readall}.\RMP{Like in Claim~\ref{bc161}, $q$ is leader, but we don't use this fact}\XMP{more clear if possible}
\end{claim}

\begin{proof}
Since $q$ invokes $\W{\bar{v},r}$ by executing line \ref{acoin},
    $q$ must see that the leaders do not agree on any value $v^*$ in line~\ref{lagree}.
Thus, for some $r'$,
    either $q$ reads both $(v,r')$ and $(\bar{v},r')$ from the leaders' registers in line~\ref{readall},
    or $q$ reads $(\bot,r')$ from at least one leader's register in line~\ref{readall}.
So, for some leader~$\ell$,
    $q$ reads $(\bot,r')$ or $(\bar{v},r')$ from $R[\ell]$ in line~\ref{readall}.
By Claim~\ref{bc163.5} and the definition of leaders,
    $r' \geq r-1$.
By the definition of $q$,
    no process invokes $\W{\bar{v},r}$ before
    $q$ invokes $\W{\bar{v},r}$,
    and so
    no process invokes $\W{\bar{v},r}$ before 
    $q$ completes reading registers in line~\ref{readall}.
So,
    by Lemma \ref{bl2} and Lemma \ref{bl2'},
    no process invokes $\W{\bar{v},j}$ or $\W{\bot,j}$ for any $j\geq r$ before $q$ completes reading registers in line \ref{readall}.
Since $q$ reads $(\bot,r')$ or $(\bar{v},r')$ with $r' \geq r-1$ from $R[\ell]$ in line~\ref{readall},
    $r' = r-1$.
Since $\ell$ is a leader,
    $q$ reads $(-,r-1)$ from the register of every leader in line~\ref{readall}.
\end{proof}

\begin{claim}\label{bc165}
$q$ completes $\W{\bot,r-1}$ before $p$ completes $\W{v,r}$.
\end{claim}

\begin{proof}
From Claim~\ref{bc164},
    $q$ reads $(-,r')$ with some $r'\leq r-1$ from $R[p]$ in line~\ref{readall}.
Since $r'\leq r-1$,
	by Observation~\ref{observation0}(\ref{mono}),
$p$ invokes $\W{-,r'}$ before invoking $\W{v,r}$.
Since $q$ reads $(-,r')$ from $R[p]$,
    $q$ starts reading registers in line~\ref{readall} before $p$ completes $\W{v,r}$.
By Claim~\ref{bc163.5},
    $q$ completes $\W{\bot,r-1}$ before $p$ completes $\W{v,r}$.
\end{proof}

Now consider the while iteration in which $p$ decides $v$ at round $r$.
\begin{claim}\label{bc166}
$p$ reads $(\bot,r-1)$ , $(\bar{v},r)$ or $(\bot,r)$ from $R[q]$ in line~\ref{readall} of the while iteration where $p$ decides $v$ at round $r$.
\end{claim}
\begin{proof}
By Claim \ref{bc165},
    $q$ completes $\W{\bot,r-1}$ before $p$ starts reading registers in line \ref{readall}.
Since $R[q]$ is a regular register,
	by Observation~\ref{observation0}(\ref{mono}),
    $p$ reads $(-,r')$ with $r' \geq r-1$ from $R[q]$ in line~\ref{readall}.
Since $p$ decides $v$ at round $r$ in that iteration,
	(1) $p$ has $R[p]=(v,r)$ in line~\ref{readall}, and
	(2) by the condition of line~\ref{notdecide}, $p$ is a leader.
Thus,
   $p$ does not read $(-,j)$ for any $j > r$ from any register in line~\ref{readall}.
This implies $p$ reads $(-,r')$ with $r' = r-1$ or $r'=r$ from $R[q]$ in line~\ref{readall}. 

By Observation~\ref{observation0}(\ref{due}),
    $q$ completes $\W{v,r-1}$ or $\W{\bar{v},r-1}$ \emph{before} it completes \linebreak $\W{\bot,r-1}$.
Recall that $p$ starts reading registers in line~\ref{readall}
	after it completes $\W{v,r}$.
So, by Claim~\ref{bc165}
    $p$ starts reading registers in line~\ref{readall}
    after $q$ completes $\W{\bot,r-1}$.
Thus,
	since $R[q]$ is a regular register,
    if $r' = r-1$, then $p$ must read $(\bot,r-1)$ from $R[q]$ in line~\ref{readall}.    
Since $q$ invokes $\W{\bar{v},r}$,
    by Observation~\ref{observation0}(\ref{uno2}),
    $q$ does not invoke  $\W{v,r}$.
Thus,
	since $R[q]$ is a regular register,
    if $r' = r$, then $p$ must read  $(\bar{v},r)$ or $(\bot,r)$ from $R[q]$ in line~\ref{readall}.
Therefore,
   $p$ reads $(\bot,r-1)$, $(\bar{v},r)$ or $(\bot,r)$ from $R[q]$ in line \ref{readall}. 
\end{proof}

When $p$ executes line~\ref{notdecide} with $R[p]=(v,r)$, by Claim \ref{bc166},
	 $p$ sees that it disagrees with process $q$,
	and $q$ trails by at most one round, and so $p$ does \emph{not} decide in line~\ref{decide} --- a contradiction.
\end{itemize}
Since both cases lead to a contradiction the lemma holds.
\end{proof}

\begin{lemma}[Agreement]\label{bl6}
If some process $p$ decides $v$,
    no process decides $\bar{v}$.
\end{lemma}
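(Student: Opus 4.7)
The plan is to argue by contradiction using Lemma~\ref{bl3.5} together with Corollary~\ref{bc2.2}. Suppose some process $p$ decides $v$ at some round $r$ and some process $p'$ decides $\bar{v}$ at some round $r'$. First I would observe that, by Definition~\ref{def-decide}, $p$ executes line~\ref{decide} with $R[p]=(v,r)$ and $p'$ executes line~\ref{decide} with $R[p']=(\bar{v},r')$; since the initial contents of every register is $(\bot,0)$ and the registers are regular, this forces $p$ to have invoked (and completed) $\W{v,r}$ and $p'$ to have invoked (and completed) $\W{\bar{v},r'}$.

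Next I would apply Lemma~\ref{bl3.5} in both directions: since $p$ decides $v$ at round $r$, no process invokes $\W{\bar{v},r}$; and since $p'$ decides $\bar{v}$ at round $r'$ (applying the lemma with the roles of $v$ and $\bar{v}$ swapped), no process invokes $\W{v,r'}$. Already this rules out $r=r'$, because $p'$ invokes $\W{\bar{v},r'}=\W{\bar{v},r}$, contradicting the first conclusion.

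I would then split on the two remaining cases. If $r \le r'$ (so in fact $r<r'$), apply Corollary~\ref{bc2.2} with round $r$: since no process invokes $\W{\bar{v},r}$, every process that invokes $\W{-,r'}$ must invoke $\W{v,r'}$. In particular $p'$, who invokes $\W{\bar{v},r'}$, would also have to invoke $\W{v,r'}$, contradicting Observation~\ref{observation0}(\ref{uno2}). The case $r'<r$ is symmetric: apply Corollary~\ref{bc2.2} with the value $\bar{v}$ in place of $v$ and round $r'$ (using that no process invokes $\W{v,r'}$) to force $p$, who invokes $\W{v,r}$, to also invoke $\W{\bar{v},r}$, again contradicting Observation~\ref{observation0}(\ref{uno2}).

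There is essentially no hard step here; all of the real work has been done in Lemma~\ref{bl3.5} (which pins the "non-existence" of opposite writes to the decision round) and in Corollary~\ref{bc2.2} (which propagates that non-existence to all larger rounds). The only thing to be slightly careful about is matching the roles of $v$ and $\bar{v}$ when invoking Lemma~\ref{bl3.5} and Corollary~\ref{bc2.2} the second time, so that the symmetric case is handled correctly.
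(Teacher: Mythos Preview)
Your proposal is correct and follows essentially the same approach as the paper: use Lemma~\ref{bl3.5} to rule out $\W{\bar{v},r}$, then Corollary~\ref{bc2.2} to propagate this to round $r'$ and contradict the write that $p'$ must have performed. The only cosmetic difference is that the paper assumes without loss of generality that $r' \ge r$ and applies Lemma~\ref{bl3.5} once, whereas you apply it in both directions and treat the two cases symmetrically; either way works.
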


\begin{proof}
Assume,
    for contradiction,
   a process $p$ decides $v$ at round $r$ with $r\geq 1$ and a process $q\neq p$ decides $\bar{v}$ at round $r'$ with $r' \geq 1$.
Without loss of generality,
	assume $r' \geq r$.
Since $p$ decides $v$ at round $r$,
    by Lemma \ref{bl3.5},
    no process invokes $\W{\bar{v},r}$.
So,
by Corollary~\ref{bc2.2},
every process that invokes $\W{-,r''}$ for any $r'' \geq r$ invokes $\W{v,r''}$ (*).
Since $q$ decides $\bar{v}$ at round $r'$ in some while iteration,
	by Definition~\ref{def-decide},
    $q$ reads $(\bar{v},r')$ from $R[q]$ in line \ref{readall} of that iteration.
Since $R[q]$ is a regular register,
    $q$ invokes $\W{\bar{v},r'}$.
By (*), $ r' < r$ --- a contradiction to $r' \geq r$.
\end{proof}

We now proceed to show that the algorithm satisfies the \textbf{Termination} property with probability 1.  

\begin{observation}\label{observation1}
For all $r \ge 1$, if a correct process $p$ does not decide a value at any round $r' \le r$, then $p$ eventually invokes $\W{-,r}$.
\end{observation}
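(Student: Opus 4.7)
The plan is to argue by induction on $r \ge 1$. The base case $r=1$ is immediate because, before entering the while loop, every process $p$ executes $R[p] \leftarrow (v,1)$ and thus invokes $\W{v,1}$.

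For the inductive step, fix $r \ge 1$, assume the observation holds for $r$, and let $p$ be a correct process that does not decide at any round $r' \le r+1$. In particular, $p$ does not decide at any round $r' \le r$, so by the induction hypothesis $p$ eventually invokes some $\W{y,r}$ with $y \in \{0,1,\bot\}$. Let $I$ be the iteration of the while loop in which $p$ invokes this write, and let $I'$ be the iteration that immediately follows $I$. Since $p$ writes in $I$ (and therefore does not decide in $I$), and since $p$ is correct, $I'$ exists.

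The crux is to identify what $p$ reads from $R[p]$ in line~\ref{capture} of $I'$. Because $R[p]$ is a regular register that is written only by $p$, and $p$'s own operations on $R[p]$ are sequential, the read in $I'$ has no concurrent write, and the most recent write that precedes it is $\W{y,r}$; so by the definition of a regular register, the read returns $(y,r)$. With this in hand, a case analysis on $y$ finishes the proof. If $y=\bot$, then $x=\bot$ in $I'$, so the condition on line~\ref{checkbot} fails and $p$ executes either line~\ref{aleader} or line~\ref{acoin}, both of which invoke $\W{-,r+1}$. If $y \in \{0,1\}$, then $x=y\neq\bot$, and the condition on line~\ref{notdecide} cannot hold, for otherwise $p$ would decide at round $r$, contradicting the hypothesis. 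So $p$ executes line~\ref{aleader} or line~\ref{bot}: in the former case $p$ invokes $\W{-,r+1}$; in the latter case $p$ invokes $\W{\bot,r}$, and applying the same reasoning to the iteration following $I'$ (in which $p$ then reads $(\bot,r)$ from $R[p]$) shows that $p$ subsequently invokes $\W{-,r+1}$.

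The main obstacle is the step where one argues that $p$'s read of $R[p]$ in $I'$ returns $(y,r)$; this hinges on $R[p]$ being a regular SWMR register and on the sequentiality of $p$'s own operations on it, so that no write of $p$ is concurrent with the read and $\W{y,r}$ is the immediately preceding write. Once this is secured, the remainder is a straightforward case analysis on the three write branches of the while loop, with one extra iteration of look-ahead needed to handle the sub-case where $p$ writes $\bot$.
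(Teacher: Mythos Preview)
Your argument is correct. The paper states this result as an Observation and offers no proof, treating it as immediate from the structure of the while loop; your induction on $r$ together with the case analysis on the three write branches is exactly how one would make it explicit. Two minor remarks that do not affect correctness: (i) in the step from $r=1$ to $r=2$, the write $\W{v,1}$ occurs on line~1 before the loop rather than inside some iteration $I$, so $I'$ should simply be taken to be the first iteration of the while loop; (ii) in the sub-case where $p$ reads $(\bot,r)$ from $R[p]$, you implicitly use that the condition on line~\ref{notdecide} cannot hold when $x=\bot$---this is true because then $p$ disagrees with itself and does not trail itself by two rounds, so the decision branch is indeed skipped.
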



\begin{lemma}\label{bl3}
For all $r \ge 1$,
	if no process invokes $\W{\bar{v},r}$,
    then every correct process that invokes $\W{-,r+1}$ decides $v$ at round $r+1$.
\end{lemma}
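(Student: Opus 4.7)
The plan is to show that any correct $p$ invoking $\W{-,r+1}$ decides in the iteration immediately after completing that write. By Corollary~\ref{bc2.2} applied to the hypothesis, $p$ actually invokes $\W{v,r+1}$. Let $I_p$ be the next while iteration after $p$ completes it. Since $R[p]$ is regular and $p$'s latest completed write is $\W{v,r+1}$, $p$ reads $(v,r+1)$ from $R[p]$ in line~\ref{readall} and captures $(v,r+1)$ in line~\ref{capture}. I then aim to verify the condition on line~\ref{notdecide}. The ``disagreement'' half is easy: any $(a,b)$ that $p$ reads from some $R[q]$ with $a\in\{\bar v,\bot\}$ forces $b\le r-1$, because either $(a,b)=(\bot,0)$ from no preceding write (so $b=0$), or regularity gives a corresponding $q$-invocation $\W{a,b}$, and Lemmas~\ref{bl2} and~\ref{bl2'} both rule out $b\ge r$. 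Hence every disagreeing process trails $p$'s round $r+1$ by at least~$2$.

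The crux is showing $p$ is a leader in $I_p$, for which I will prove the stronger claim that no process ever invokes $\W{-,r+2}$. Suppose for contradiction some process does; by Corollary~\ref{bc2.2} such an invocation is $\W{v,r+2}$, and let $q_0$ be the \emph{first} process to invoke it. Since a process's rounds can rise by at most one per iteration, $q_0$ previously invoked $\W{-,r+1}$, which by Corollary~\ref{bc2.2} is $\W{v,r+1}$. Let $I'_{q_0}$ be the iteration immediately after this $\W{v,r+1}$: by regularity $q_0$ captures $(v,r+1)$ from $R[q_0]$, and for every $q'\ne q_0$, if $q_0$ read $(-,b')$ from $R[q']$ with $b'\ge r+2$, then regularity together with Observation~\ref{observation0}(\ref{mono}) would give $q'$ an invocation of $\W{-,r+2}$ before $q_0$'s read in $I'_{q_0}$ completes, and hence before $q_0$'s own eventual invocation of $\W{v,r+2}$---contradicting the choice of $q_0$ as the first invoker. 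So $q_0$ is a leader in $I'_{q_0}$, and by the same disagreement bound applied to $q_0$, line~\ref{notdecide} is satisfied; $q_0$ therefore decides $v$ at round $r+1$ and halts without ever invoking $\W{v,r+2}$, contradicting its defining property.

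Hence no process invokes $\W{-,r+2}$, so every register that $p$ reads in $I_p$ returns $(-,b)$ with $b\le r+1$, making $p$ a leader. Combined with the disagreement bound from the first paragraph, the condition on line~\ref{notdecide} holds in $I_p$, so $p$ decides $v$ at round $r+1$ as required. The main obstacle is the ``first-process'' argument used to rule out invocations at round $r+2$: a direct attempt to bound what $p$ sees appears circular, since a process running ahead of $p$ could a~priori prevent $p$ from being a leader, and the clean way around this is to apply the same decision argument to the \emph{earliest} hypothetical invoker of $\W{-,r+2}$, whose read in $I'_{q_0}$ is strictly earlier than any such invocation and therefore forces it to decide at round $r+1$ instead.
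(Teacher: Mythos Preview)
Your proof is correct and follows essentially the same approach as the paper's. Both arguments hinge on the same ``first-process'' trick combined with the disagreement bound from Lemmas~\ref{bl2} and~\ref{bl2'}: the paper takes $p$ to be the \emph{first} process that fails the condition in line~\ref{notdecide} with $R[p]=(v,r+1)$, whereas you take $q_0$ to be the \emph{first} process to invoke $\W{-,r+2}$---but since (by Corollary~\ref{bc2.2} and Lemma~\ref{bl2'}) no process can execute line~\ref{bot} with round $r+1$, failing line~\ref{notdecide} with $(v,r+1)$ and invoking $\W{-,r+2}$ occur in the same iteration, so these are the same event and the two arguments coincide.
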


\begin{proof}
Assume,
    for contradiction,
    there is a round $r \geq 1$ such that no process invokes $\W{\bar{v},r}$,
    some correct process $x$ invokes $\W{-,r+1}$,
    but $x$ does not decide $v$ at round $r+1$.
Since no process invokes $\W{\bar{v},r}$, by Corollary~\ref{bc2.2},
	$x$ invokes $\W{v,r+1}$.
By Observation~\ref{observation0}(\ref{uno2}), $x$ never invokes $\W{\bar{v},r+1}$.
So $x$ cannot decide $\bar{v}$ at round $r+1$.
Since $x$ does not decide $v$ or $\bar{v}$ at round $r+1$,
	and $x$ invokes $\W{v,r+1}$,
	$x$ must fail the condition in line~\ref{notdecide} with $(v,r+1)$ in its register, i.e., with $R[x] = (v,r+1)$.
Let $p$ be the \emph{first} process that fails the condition in line~\ref{notdecide} with $(v,r+1)$ in its register, i.e., with $R[p] = (v,r+1)$.

Process $p$ does so in line~\ref{notdecide} of some while iteration.
Since $R[p]$ is a regular register,
    when $p$ reads all the registers in line~\ref{readall} in this iteration,
    $p$ reads $(v,r+1)$ from $R[p]$.
Since $p$ fails the condition in line~\ref{notdecide},
    when $p$ reads the registers $R[-]$ in line~\ref{readall},
    $p$ sees that
    (1) $p$ is not a leader, or 
    (2) $p$ is a leader but some process that trails $p$ by less than 2 rounds disagrees with $p$;
    i.e., there is a process $y$ such that $p$ reads $(\bar{v},r')$ or $(\bot,r')$ from
    register $R[y]$ in line~\ref{readall} with $r' = r$ or $r' = r+1$.
We now prove that both cases are impossible (and so the lemma holds):

\textbf{Case 1:}
$p$ is not a leader.
Since $p$ reads $(v,r+1)$ from $R[p]$ in line~\ref{readall},
	and $p$ is not a leader,
	there must be at least one process $q \neq p$
	such that $p$ reads $(-,r'')$ with $r'' > r+1$ from $R[q]$.
Since $R[q]$ is a regular register,
    $q$ invokes $\W{-,r''}$ before $p$ completes reading $R[q]$ in line~\ref{readall}.
Thus,
	$q$ invokes $\W{-,r''}$
    before $p$ fails the condition in line~\ref{notdecide} with $R[p] = (v,r+1)$~(*).
 
Since $r'' > r+1$, $q$ must invoke $\W{-,r+1}$ before invoking $\W{-,r''}$.
Since by assumption no process invokes $\W{\bar{v},r}$, by Corollary~\ref{bc2.2},
	 $q$ invokes $\W{v,r+1}$.
So $q$ fails the condition in line~\ref{notdecide} with $R[q] = (v,r+1)$ before invoking $\W{-,r''}$ with $r'' > r+1$.
Thus, by (*) $q$ fails the condition in line~\ref{notdecide} with $R[q] = (v,r+1)$
    before $p$ fails the condition in line~\ref{notdecide} with $R[p] = (v,r+1)$.
Since $q \neq p$, this contradicts the definition of $p$.

\textbf{Case 2:}
There is a process $y$ such that $p$ reads $(\bar{v},r')$ or $(\bot,r')$ from
    register $R[y]$ in line~\ref{readall} with $r' = r$ or $r' = r+1$.
Since $R[y]$ is a regular register,
    process $y$ invokes $\W{\bar{v},r'}$ or $\W{\bot,r'}$ with $r' = r$ or $r' = r+1$.
This contradicts Corollary~\ref{bc2.2} since, by assumption,
    no process invokes $\W{\bar{v},r}$.
\end{proof}

\begin{lemma}\label{bl4}
Suppose some process completes a $\W{-,r-1}$ operation for some round $r \ge 2$.
Let $\W{v,r-1}$ be the \emph{first} $\W{-,r-1}$ operation that completes.\footnote{Note that from Observation~\ref{observation0}(\ref{due}), $v \not = \bot$.}
If all the processes that invoke a $\W{flip(),r}$ operation do so with $flip()=v$,
	then no process invokes $\W{\bar{v},r}$.
 \end{lemma}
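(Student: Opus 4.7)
The plan is to argue by contradiction: suppose some process invokes $\W{\bar{v},r}$, and let $q$ be the \emph{first} such process. Since the hypothesis on round-$r$ flips rules out line~\ref{acoin}, $q$ must invoke $\W{\bar{v},r}$ via line~\ref{aleader}, so $q$ previously read $(-, r-1)$ from $R[q]$ and saw all leaders agree on $\bar{v}$ at some common round $r^* \geq r-1$.

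My first step is to pin down $r^* = r-1$. If instead $r^* \geq r$, then $q$ reads $(\bar{v}, r^*)$ from some leader $\ell$, so by regularity $\ell$ invokes $\W{\bar{v},r^*}$ before $q$ invokes $\W{\bar{v},r}$; but then the contrapositive of Lemma~\ref{bl2} forces some process to have already invoked $\W{\bar{v},r}$ by that earlier time, contradicting $q$'s minimality.

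The central step is to look at what $q$ reads from $R[p_0]$, where $p_0$ is the (unique) process whose $\W{v,r-1}$ is the first round-$(r-1)$ write to complete. By Observation~\ref{observation0}(\ref{uno2}), $p_0$ never invokes $\W{\bar{v},r-1}$, so by regularity $q$ cannot read $(\bar{v},r-1)$ from $R[p_0]$. I then argue that (a) $q$'s read of $R[p_0]$ is forced to return some round-$(r-1)$ value, and (b) $p_0$ is a leader in $q$'s view. Combining (a), (b), and the fact that leaders agree on $\bar{v}$, $q$ would have to read $(\bar{v},r-1)$ from $R[p_0]$---a contradiction.

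For (a), I observe that $q$ must have completed its own $\W{-,r-1}$ operation before starting line~\ref{readall} (since $R[q]$ is regular and $q$ is the sole writer of $R[q]$). When $q \neq p_0$, the ``first completed'' hypothesis then implies that $p_0$ has already completed $\W{v,r-1}$ by the time $q$ begins reading $R[p_0]$, so by regularity together with Observation~\ref{observation0}(\ref{mono}) the returned round is $\geq r-1$; the case $q = p_0$ follows directly by inspecting $q$'s last completed write, which is $\W{v,r-1}$ or $\W{\bot,r-1}$ (by Observation~\ref{observation0}(\ref{due})), both at round $r-1$ with prefer in $\{v, \bot\}$. For (b), since $r^*$ is the maximum round $q$ sees and $p_0$'s register shows round $\geq r-1 = r^*$, that round is exactly $r-1$, making $p_0$ a leader. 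I expect the main obstacle to be the timing argument in (a) for the case $q \neq p_0$: the chain ``$p_0$ completes $\W{v,r-1}$ before $q$ completes its own $\W{-,r-1}$ before $q$ starts line~\ref{readall}'' is what places $p_0$'s completion strictly before $q$'s read of $R[p_0]$, and this is the only place where the ``first completed'' hypothesis is used in an essential way.
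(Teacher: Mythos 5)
Your proposal is correct and follows essentially the same route as the paper's proof: it identifies the first process $q$ that invokes $\W{\bar{v},r}$, uses the hypothesis on the round-$r$ flips to force $q$ into line~\ref{aleader}, pins the leaders' round to $r-1$ via Lemma~\ref{bl2}, uses the ``first completed $\W{-,r-1}$'' hypothesis to show that $p_0$'s register is read at round $\ge r-1$ (hence $p_0$ is a leader in $q$'s view), and derives the contradiction with Observation~\ref{observation0}(\ref{uno2}). The only cosmetic difference is that the paper argues the contrapositive (exhibiting a flip equal to $\bar v$) rather than ruling out line~\ref{acoin} directly, which is logically the same step.
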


\begin{proof}
Suppose some process completes a $\W{-,r-1}$ operation for some round $r \ge 2$.
Let $\W{v,r-1}$ be the \emph{first} $\W{-,r-1}$ operation that completes,
	and $p$ be the process that completes this $\W{v,r-1}$ operation.
Assume that some process invokes $\W{\bar{v},r}$;
	we now prove that
    some process invokes $\W{flip(),r}$ operation with $flip()= \bar{v}$,
    and so the lemma holds.
    
Let $q$ be the \emph{first} process that invokes $\W{\bar{v},r}$;
note that $q$ could be $p$. 
According to the algorithm,
	$q$ invokes $\W{\bar{v},r}$ by executing either (1) line~\ref{aleader}
	or (2) line~\ref{acoin}.
So there are two cases:
\begin{itemize}
\item  
\textbf{Case 1:} $q$ invokes $\W{\bar{v},r}$ by executing line~\ref{acoin}.
In this case, $q$ invokes $\W{flip(),r}$ with $flip() = \bar{v}$, as we want to show.

\item
\textbf{Case 2:} $q$ invokes $\W{\bar{v},r}$ by executing line~\ref{aleader}.
Consider the while iteration in which $q$ invokes $\W{\bar{v},r}$ by executing line~\ref{aleader}.

\begin{claim}\label{9c3}
$q$ reads $(-,r'')$ with $r'' \geq r-1$ from $R[p]$ in line~\ref{readall}.
\end{claim}

\begin{proof}

Since $q$ invokes $\W{\bar{v},r}$ in line~\ref{aleader},
	it is clear that $q$ starts reading registers in line~\ref{readall} after it completes its $\W{-,r-1}$ operation.
Since $p$ is the \emph{first} process that completes a $\W{-,r-1}$ operation,
    $q$ starts reading registers in line~\ref{readall} after $p$ completes its $\W{v,r-1}$ operation.
Since $R[p]$ is a regular register,
	by Observation~\ref{observation0}(\ref{mono}),
    $q$ reads $(-,r'')$ with $r'' \geq r-1$ from $R[p]$ in line~\ref{readall}.
\end{proof}

Let $r'$ be the round of the leaders that $q$ sees in line~\ref{readall},
	i.e., $q$ reads $(-,r')$ from all the leaders' registers in line~\ref{readall}.
By Claim~\ref{9c3} and the definition of leaders, $r' \geq r-1$.
Since $q$ invokes $\W{\bar{v},r}$ in line~\ref{aleader},
    the leaders that $q$ sees agree on $\bar{v}$; more precisely,
    $q$ reads $(\bar{v} , r')$ from all the leaders' registers.
 
 \begin{claim}\label{9c2}
$r' = r-1$.
\end{claim}

\begin{proof}
Since $q$ completes reading all the registers in line~\ref{readall}
    before it invokes $\W{\bar{v},r}$ in line~\ref{aleader},
    by the choice of $q$,
    $q$ completes reading the registers before any process invokes $\W{\bar{v},r}$.
In other words, no process invokes $\W{\bar{v},r}$
	before $q$ completes reading the registers in line~\ref{readall}.
By Lemma~\ref{bl2},
	no process invokes $\W{\bar{v},j}$ with $j \ge r$
	before $q$ completes reading the registers in line~\ref{readall}.
Thus,
   since the registers are regular,
    $q$ does not read $(\bar{v},j)$ with $j\geq r$ from any register in line~\ref{readall}.
Since $q$ reads $(\bar{v},r')$ with $r' \geq r-1$ from the leaders' registers,
	$r' = r-1$.
\end{proof}

From Claim \ref{9c2},
    $q$ reads $(\bar{v}, r-1)$ from the register of every leader in line~\ref{readall}.
So, by the definition of leaders and
    Claim~\ref{9c3},
   $r'' = r-1$ and so $p$ must be one of the leaders that $q$ sees,
    i.e., $q$ reads $(\bar{v}, r-1)$ from $R[p]$ in line~\ref{readall}.
Since $R[p]$ is a regular register,
    $p$ invokes $\W{\bar{v},r-1}$.
Thus,
    by the definition of $p$,
    $p$ invokes both $\W{v,r-1}$ and $\W{\bar{v},r-1}$ --- a contradiction to Observation~\ref{observation0}(\ref{uno2}).
Therefore,
    Case 2 is impossible. 
\end{itemize}
Since  in Case 1, process $q$ invokes $\W{flip(),r}$ with $flip() = \bar{v}$ and Case 2 is impossible,
the lemma holds.
\end{proof}

\begin{lemma}\label{bc1}
Suppose some process completes a $\W{-,r-1}$ operation for some round $r \ge 2$.
Let $\W{v,r-1}$ be the \emph{first} $\W{-,r-1}$ operation that completes.
If all the processes that invoke a $\W{flip(),r}$ operation do so with $flip()=v$,
     then every correct process decides at some round $r' \leq r+1$. 
\end{lemma}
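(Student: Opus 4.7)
The plan is to combine Lemma~\ref{bl4} and Lemma~\ref{bl3} in the obvious way, using Observation~\ref{observation1} to guarantee that correct processes which have not yet decided will reach round $r+1$.

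First I would invoke Lemma~\ref{bl4} under the hypotheses of the lemma: since some process completes a $\W{-,r-1}$ operation, $\W{v,r-1}$ is the first such operation to complete, and every $\W{\text{flip}(),r}$ uses $\text{flip}()=v$, we conclude that no process invokes $\W{\bar{v},r}$. This gives us exactly the hypothesis needed to apply Lemma~\ref{bl3}, which then tells us that every correct process that invokes $\W{-,r+1}$ decides $v$ at round $r+1$.

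Next I would argue by contradiction: suppose some correct process $p$ does not decide at any round $r' \le r+1$. In particular, $p$ does not decide at any round $r' \le r+1$, so by Observation~\ref{observation1} (applied with the round $r+1$), $p$ eventually invokes $\W{-,r+1}$. But then by the consequence of Lemma~\ref{bl3} derived above, $p$ must decide $v$ at round $r+1$, contradicting the assumption that $p$ does not decide at any round $r' \le r+1$.

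There is no real obstacle here; the lemma is essentially a corollary that packages Lemmas~\ref{bl4} and~\ref{bl3} together with Observation~\ref{observation1}. The only minor subtlety is making sure to invoke Observation~\ref{observation1} with the correct round (namely $r+1$ rather than $r$) so that the ``eventually invokes $\W{-,r+1}$'' conclusion lines up exactly with the hypothesis of Lemma~\ref{bl3}.
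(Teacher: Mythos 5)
Your proposal is correct and matches the paper's own proof: both combine Lemma~\ref{bl4} and Lemma~\ref{bl3} and then use Observation~\ref{observation1} (applied at round $r+1$) to cover the correct processes that never reach round $r+1$; the paper merely phrases this last step as a direct case split rather than a contradiction.
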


\begin{proof}
Suppose some process completes a $\W{-,r-1}$ operation for some round $r \ge 2$, and
	let $\W{v,r-1}$ be the \emph{first} $\W{-,r-1}$ operation that completes.
Since all the processes that invoke a $\W{flip(),r}$ operation do so with $flip()=v$,
    by Lemma \ref{bl4},
    no process invokes $\W{\bar{v},r}$.
Then by Lemma \ref{bl3},
    every correct process that invokes $\W{-,r+1}$ decides $v$ at round $r+1$.
By Observation~\ref{observation1}, every correct process that does not
	invoke $\W{-,r+1}$ must decide at some round $r' \le r+1$.
So every correct process decides at some round $r' \leq r+1$.
\end{proof}

%

\begin{lemma}[Termination]\label{termination}
The algorithm terminates with probability 1, even against a strong adversary.\RMP{Say this better }
\end{lemma}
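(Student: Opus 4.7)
The plan is to show that with probability 1 there is some round $r$ for which the premise of Lemma~\ref{bc1} holds, thereby forcing every correct process to decide. I argue by contradiction, assuming that under some strong-adversary strategy the algorithm fails to terminate with probability at least some $\epsilon > 0$.

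For each $r \ge 2$, let $\tau_r$ denote the (random) time of the \emph{first} completion of any $\W{-,r-1}$ operation, with $\tau_r = \infty$ if no such completion ever occurs, and let $v_r$ denote the value written by that operation; by Observation~\ref{observation0}(\ref{due}), the first completed $\W{-,r-1}$ cannot have value $\bot$, so $v_r \in \{0,1\}$. Let $G_r$ be the event ``$\tau_r < \infty$ and every process that invokes $\W{flip(),r}$ does so with $flip() = v_r$''. By Lemma~\ref{bc1}, $G_r$ implies that every correct process decides by round $r+1$; hence the event of non-termination is contained in $\bigcap_{r \ge 2} \bar G_r$. By Observation~\ref{observation1}, non-termination also forces $\tau_r < \infty$ for every $r$.

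The key step is to bound, for every adversary and every $r \ge 2$,
\[
P\bigl(G_r \,\big|\, \mathcal{F}_{\tau_r},\ \tau_r < \infty\bigr)\ \ge\ 2^{-n},
\]
where $n = |\Pi|$ and $\mathcal{F}_{\tau_r}$ is the history up to $\tau_r$. Two facts drive this estimate. First, any process $q$ that invokes $\W{flip(),r}$ has previously completed $\W{\bot,r-1}$ (by the algorithm's control flow), which is itself preceded by $q$'s own completion of some $\W{-,r-1}$ (by Observation~\ref{observation0}(\ref{due})); hence every round-$r$ flip occurs strictly after $\tau_r$, at which point $v_r$ is already fixed and $\mathcal{F}_{\tau_r}$-measurable. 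Second, I would introduce idealized random variables $Z_{p,r} \in \{0,1\}$, one per process, denoting the outcome that $p$'s round-$r$ flip \emph{would} produce; in the strong-adversary model these are independent fair coins that are independent of $\mathcal{F}_{\tau_r}$. The event $\{Z_{p,r} = v_r \text{ for all } p \in \Pi\}$ then has conditional probability $2^{-n}$ and is contained in $G_r$ (since the actual round-$r$ flip outcomes form a subset of the $Z_{p,r}$), which yields the bound.

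To combine the round-by-round estimates, I set $B_m := \bigcap_{r=2}^{m} \bigl(\bar G_r \cap \{\tau_r < \infty\}\bigr)$ and apply the tower property at the stopping time $\tau_{m+1}$ together with the above bound to obtain inductively that $P(B_{m+1}) \le (1-2^{-n})\,P(B_m)$, and hence $P(B_m) \le (1-2^{-n})^{m-1}$. Since non-termination is contained in $B_m$ for every $m$ and $(1-2^{-n})^{m-1} \to 0$, this yields $P(\text{non-termination}) = 0$, contradicting the assumption. I expect the main obstacle to be the rigorous formalization of the strong-adversary model required to justify that ``the $Z_{p,r}$ are independent fair coins independent of $\mathcal{F}_{\tau_r}$'': one must set up the underlying probability space so that the adversary is a (possibly randomized) map from histories to scheduling decisions and each coin flip, once invoked, produces an independent fair bit. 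Once this is in place, the $2^{-n}$ bound and the induction go through routinely.
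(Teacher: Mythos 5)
Your proof is correct and follows essentially the same route as the paper's: the first completed $\W{-,r-1}$ operation fixes $v$ strictly before any round-$r$ coin flip, so a strong adversary cannot prevent all round-$r$ flips from matching $v$, an event of probability at least $2^{-n}$ that triggers Lemma~\ref{bc1}. The only difference is that you make explicit the probabilistic bookkeeping (stopping times, tower property, geometric decay across rounds) that the paper compresses into its final sentence.
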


\begin{proof}
Consider any round $r \ge 2$ such that some correct processes have not yet decided, i.e., they have not decided at any round $r' \le r$.
Thus, some process has not decided at any round $r' \le r-1$.
By Observation~\ref{observation1},
	this process eventually invokes $\W{-,r-1}$, and since it is correct, it also completes $\W{-,r-1}$.
Suppose the \emph{first} $\W{-,r-1}$ operation that completes is $op= \W{v,r-1}$.
Since any invocation of $\W{flip(),r}$ can occur only \emph{after} $op$ completes,
	the value $v$ is set \emph{before} any invocation of $\W{flip(),r}$, i.e., before any coin flip at round $r$.
Since a strong adversary cannot control coin flips,
    it cannot prevent the coin flips at round $r$ to match the value $v$ that was set before the first coin toss at round $r$.
Thus,
    there is a positive probability $\epsilon \ge 2^{-n}$
    that all the processes that invoke a $\W{flip(),r}$ operation do so with $flip()=v$.
 So, by Lemma \ref{bc1},
   with probability $\epsilon$ every correct process decides at some round $r' \leq r+1$;
    i.e.,
    with probability $\epsilon$ the algorithm terminates by round $r+1$.
Since this holds for every round $r \ge 2$ such that some correct processes have not decided at any round $r' \le r$,
	the algorithm terminates with probability $1$, even against a strong adversary.
\end{proof}

By Lemmas~\ref{validity},~\ref{bl6}, and~\ref{termination}, the algorithm satisfies Validity, Agreement, and Termination with probability 1.
So we have:

\begin{theorem}
The randomized consensus algorithm of Aspnes and Herlihy shown in Algorithm~\ref{Algocon} works with
	regular SWMR registers against a strong adversary.
\end{theorem}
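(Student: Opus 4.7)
The plan is to treat the final theorem as essentially a bookkeeping statement: once we have established each of the three consensus properties (Validity, Agreement, and Termination with probability~1) for Algorithm~\ref{Algocon} when its shared variables are regular SWMR registers, the theorem is immediate. So my proof will simply invoke Lemma~\ref{validity} (Validity), Lemma~\ref{bl6} (Agreement), and Lemma~\ref{termination} (Termination with probability~1), and observe that together they witness exactly the statement we need --- namely, that the algorithm solves randomized consensus, against a strong adversary, using only regular SWMR registers.

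Before writing this one-line argument, I would briefly recall why nothing in the register model beyond what Lemmas~\ref{validity}, \ref{bl6}, and~\ref{termination} already exploit is needed. All three lemmas are proved under the sole assumption that each $R[p]$ is a regular SWMR register, using only the basic bookkeeping facts in Observation~\ref{observation0} (on the rounds written by a single process) and the ``monotonicity'' lemmas (Lemmas~\ref{bl1}, \ref{bl1.5}, \ref{bl2}, \ref{bl2'} and Corollary~\ref{bc2.2}) which describe how values and rounds propagate despite regular-register anomalies such as new-old inversions. The Termination argument additionally uses that the adversary is strong, i.e., cannot predict or control future coin flips; this is exactly what Lemma~\ref{termination} handles. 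Hence no further assumption on the registers is required at the level of the theorem.

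There is effectively no obstacle at this point: the hard work is entirely in the three lemmas invoked. If anything, the step that requires the most care is Termination (Lemma~\ref{termination}), because regularity allows a read to see an ``old'' value concurrent with a newer write, which is precisely the weakness that a strong adversary could, in principle, exploit; this is the reason the earlier lemmas (\ref{bl2}, \ref{bl2'}, \ref{bl4}, \ref{bc1}) were structured around the notion of the \emph{first} write of a given round and the propagation constraints that regularity still provides. For the theorem itself, however, the argument is a one-line combination of the three lemmas.
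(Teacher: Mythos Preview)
Your proposal is correct and matches the paper's own proof: the theorem is stated immediately after the sentence ``By Lemmas~\ref{validity},~\ref{bl6}, and~\ref{termination}, the algorithm satisfies Validity, Agreement, and Termination with probability 1,'' so it is indeed just the combination of those three lemmas. Your additional remarks about why no further register assumption is needed are accurate but not required for the proof itself.
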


\section*{Acknowledgements}
We thank Kevan M. Hollbach for his helpful comments on this paper.

\bibliographystyle{abbrv}
\bibliography{consen.bib}
   
\appendix
\counterwithin{NewCounter}{section}
\section{Appendix}

We now briefly explain how the algorithm in~\cite{AH1990consensus} was shown to terminate with probability 1
	under the assumption that registers are \emph{atomic} (where each operation is instantaneous),
	and why this proof does not hold if we replace them with \emph{linearizable}
	(implementations of) registers (where each operation spans an interval).

The following Lemma, shown in~\cite{AH1990consensus}, is central to the algorithm's original proof of termination:

\begin{lemma}\label{l7}
Let $v$ be the first value written at round $r-1$.
If all the processes that flip a coin at round $r$ get the value $v$,\footnote{This means every process that writes $(flip(),r)$ to its atomic register does so with $flip()=v$.} 
then all processes have the same preference at round $r$.
\end{lemma}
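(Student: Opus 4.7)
The plan is to prove by contradiction that, under the hypothesis, no process writes $(\bar{v}, r)$, which is what ``all processes have the same preference at round $r$'' amounts to. The argument mirrors the structure of Lemma~\ref{bl4} but exploits the fact that under atomicity every read and write occurs at a single instant, so ``the first process to write $(\bar{v}, r)$'' is well-defined in real time.

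First I would fix this alleged first writer $q$ of $(\bar{v}, r)$. Since the coin-flip hypothesis rules out line~\ref{acoin}, $q$ must be writing via line~\ref{aleader}, having just read $(\bar{v}, r')$ from every leader for some $r' \geq r-1$ (the inequality because $q$'s own round before this write is $r-1$, and leaders have the maximum round). The next step is to pin down $r' = r-1$: since no process wrote $(\bar{v}, r)$ before $q$, the atomic analog of Lemma~\ref{bl2} yields that no process wrote $(\bar{v}, j)$ for any $j \geq r$ either; but under atomicity any value $(\bar{v}, r')$ that $q$ reads must be the trace of a completed write of $(\bar{v}, r')$ at an earlier instant; hence $r' < r$, so $r' = r-1$.

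Finally, let $p$ be the process whose $\W{v, r-1}$ is the first round-$(r-1)$ write. Because $q$ must have completed its own round-$(r-1)$ write before the current read-all and $p$'s write preceded $q$'s, the atomic read of $R[p]$ returns $(-, s)$ with $s \geq r-1$. The case $s > r-1$ would force $r' > r-1$ (since $r'$ is the maximum round over all registers), contradicting the previous step; so $q$ reads $(v, r-1)$ from $R[p]$. But then $p$ is a leader (its round equals $r'$) with preference $v \neq \bar{v}$, contradicting the leader agreement on $\bar{v}$.

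The hardest step is pinning down $r' = r-1$, which is exactly where atomicity is used in an essential way: the crisp correspondence between a read of $(\bar{v}, r')$ and a completed write of $(\bar{v}, r')$ strictly earlier in real time is what allows the minimality of $q$ to rule out $r' \geq r$. This correspondence is weakened under (non-atomic) linearizability, which is essentially why the original termination proof does not generalize beyond atomic registers.
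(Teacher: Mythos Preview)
The paper does not itself prove Lemma~\ref{l7}; it is cited from \cite{AH1990consensus}. What the paper does prove is the regular-register analog, Lemma~\ref{bl4}, and your argument is (as you note) a specialization of that proof to atomic registers: the same two cases (line~\ref{acoin} vs.\ line~\ref{aleader}), the same use of Lemma~\ref{bl2} to pin down $r' = r-1$, and the same use of $p$'s leader status to reach a contradiction. One minor imprecision: you conclude that $q$ reads $(v, r-1)$ from $R[p]$, but $p$ may have written $(\bot, r-1)$ after $(v, r-1)$ and before $q$'s read; either way $p$ is a leader whose preference is not $\bar v$, so the contradiction with leader agreement on $\bar v$ survives. (The paper's version goes the other direction: since $p$ is a leader and leaders agree on $\bar v$, $q$ must have read $(\bar v, r-1)$ from $R[p]$, forcing $p$ to have invoked $\W{\bar v, r-1}$ as well as $\W{v, r-1}$, contradicting Observation~\ref{observation0}(\ref{uno2}).)

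Your final paragraph, however, misidentifies where atomicity matters. The step $r' = r-1$ does not require atomicity; the paper carries out exactly this argument for \emph{regular} registers in Lemma~\ref{bl4}, and the appendix explicitly states that Lemma~\ref{l7} continues to hold with linearizable registers. The reason the original termination proof fails under linearizability is not that this lemma breaks, but that its hypothesis---``$v$ is the first value written at round $r-1$''---is no longer fixed before the round-$r$ coin flips: with non-instantaneous writes, a strong adversary can observe a coin flip at round $r$ and \emph{then} choose which of the still-pending round-$(r-1)$ writes is linearized first. So the lemma remains true but can no longer be invoked with adversary-independent positive probability. The paper's fix in Lemma~\ref{bl4} is precisely to replace ``first value written'' by ``first $\W{-,r-1}$ that \emph{completes},'' a real-time event that is determined before any round-$r$ coin flip regardless of linearization order.
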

%

The termination argument  given in~\cite{AH1990consensus} uses Lemma~\ref{l7}, and intuitively it goes as follows.
Before any process flips a coin at any round $r$, at least one process writes some value at round $r-1$. 
Since registers are atomic, the \emph{first value} written by a process at round $r-1$, say value $v$,
	is \emph{fixed before} any process flips a coin at round~$r$.
Thus, even a strong adversary cannot prevent any process that flips a coin at round~$r$
	to get the value $v$ that was determined before any of these coin tosses.
So all the processes that flip a coin at round $r$ get the value $v$
	with \emph{some positive probability $\epsilon$} ($\epsilon \ge 2^{-n}$ if the system has $n$ processes).
The above argument is the core of proof of termination with probability 1,
	because by Lemma~\ref{l7}, if this happens at any round $r$
	then all processes have the \emph{same} preference at round $r$,
	and (as shown in~\cite{AH1990consensus}) this causes the algorithm to terminate by round $r+2$.

When the algorithm's atomic registers are replaced with linearizable registers,
	the above argument, however, fails:
	in a nutshell, even though Lemma~\ref{l7} still holds,
	it can no longer be used to prove that the algorithm terminates.
Intuitively, this is because 
	it is no longer true that
	the first value written by a process at round $r-1$
	is fixed before any process flips a coin at round~$r$:
	as we explain below,
	with linearizable registers,
	a stronger adversary can \emph{first}
	see the result of some coin toss at round $r$,
	and \emph{then} control which write operation is linearized \emph{first} among the write operations of round $r-1$.

\begin{figure}[!htb]
    \centering 
    \captionsetup{justification=centering}
    \includegraphics[width=0.6\textwidth]{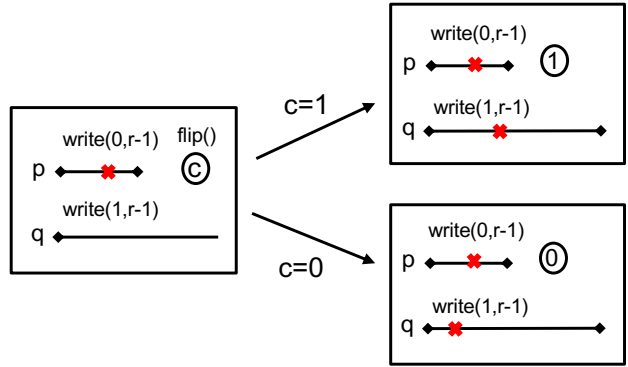}
    \caption{Adversary controlling which value is the first one to be written in round $r-1$} 
    \label{appendixfig}
\end{figure}

 With linearizable registers, every register operation spans an \emph{interval} that starts with an invocation and is followed by a response.
A strong adversary can take advantage of this by scheduling processes and operations as follows (see Figure~\ref{appendixfig}):

\begin{enumerate}[(a)]

\item some processes invoke $\W{0,r-1}$ and some processes invoke $\W{1,r-1}$,
	and all the processes invoke these $\W{-,r-1}$ operations at the same time;\XMP{at the same time -> one-after-the-other/precede all responses.}

\item one of these operations completes, say the operation $w_p = \W{0,r-1}$ by process $p$;

\item after completing the operation $w_p$, $p$ sees both $(0,r-1)$ and $(1,r-1)$ from the registers that it reads in line~\ref{readall},
		and so $p$ eventually flips a coin in line~\ref{acoin} and gets some value $c \in \{0,1\}$ at round $r$;
	
\item if $p$ gets $c=0$, then the adversary schedules one of the concurrent $\W{1,r-1}$
	operations (namely the write by process $q$ in Figure~\ref{appendixfig})
	so that it is the write that is linearized \emph{first} among all the $\W{-,r-1}$ operations;
	if $p$ gets $c=1$, then the adversary schedules $w_p = \W{0,r-1}$ to be
	the write that is linearized \emph{first} among all the $\W{-,r-1}$ operations.
	
So in both cases, the adversary can schedule operations to ensure that the first value written
	at round $r-1$ is \emph{not} equal to the coin that $p$ flips at round $r$.

\end{enumerate}

%
%
%
%
%
%

Thus the core argument of the proof of termination given in~\cite{AH1990consensus}
	for atomic registers does not hold when registers are replaced with linearizable (implementations of) registers:
	even though Lemma~\ref{l7} still holds, the adversary can schedule processes
	and operations so that it can never be applied.

Finally, it is also worth noting that if registers are \emph{strongly linearizable}~\cite{sl1}, then the original proof of termination given in \cite{AH1990consensus} holds again: by the time any process flips a coin in line~\ref{acoin} at round $r$, the $\W{-,r-1}$ operation that is linearized first has already been determined.
\end{document}